\tikzset{partial ellipse/.style args={#1:#2:#3}{insert path={+ (#1:#3) arc (#1:#2:#3)} }}
\tikzset{->-/.style={decoration={ markings, mark=at position #1 with {\arrow{>}}},postaction={decorate}}}
\tikzset{-<-/.style={decoration={ markings, mark=at position #1 with {\arrow{<}}},postaction={decorate}}}
\definecolor{Dgreen}{RGB}{0,153,0}
\newcommand*{\mailto}[1]{\href{mailto:#1}{\nolinkurl{#1}}}
\newtheorem{theorem}{Theorem}
\newtheorem{corollary}[theorem]{Corollary}
\theoremstyle{definition}
\newtheorem{Rhp}{RH problem}
\newcommand{\R}{\mathbb{R}}
\newcommand{\C}{\mathbb{C}}
\newcommand{\be}{\begin{equation}}
\newcommand{\ee}{\end{equation}}
\newcommand{\bea}{\begin{eqnarray}}
\newcommand{\eea}{\end{eqnarray}}
\newcommand{\I}{\mathrm{i}}
\newcommand{\E}{\mathrm{e}}
\newcommand{\lb}{\lambda}
\def\XXint#1#2#3{{\setbox0=\hbox{$#1{#2#3}{\int}$}
     \vcenter{\hbox{$#2#3$}}\kern-.5\wd0}}
\def\d{{\rm d}}
\def\1{\operatorname{Id}}
\def\exp{\operatorname{exp}}
\numberwithin{equation}{section}
\let\oldtocsection=\tocsection
\let\oldtocsubsection=\tocsubsection
\let\oldtocsubsubsection=\tocsubsubsection
\renewcommand{\tocsection}[2]{\hspace{0em}\oldtocsection{#1}{#2}}
\renewcommand{\tocsubsection}[2]{\hspace{2em}\oldtocsubsection{#1}{#2}}
\renewcommand{\tocsubsubsection}[2]{\hspace{4em}\oldtocsubsubsection{#1}{#2}}
\title[Algebro-Geometric solution of the mCH equation]{Riemann-Hilbert approach to the Algebro-Geometric solution of the modified Camassa-Holm equation with linear dispersion term}
\author{Engui Fan}
\address{School of Mathematical Sciences  and Key Laboratory   for Nonlinear Science, Fudan   University, Shanghai 200433, P. R. China.}
\email{faneg@fudna.edu.cn}
\author{Gaozhan Li}
\address{School of Mathematical Sciences  and Key Laboratory   for Nonlinear Science, Fudan   University, Shanghai 200433, P. R. China.}
\email{gzli20@fudna.edu.cn}
\author{Yiling Yang}
\address{College of Mathematics and Statistics, Chongqing University, Chongqing, 401331, P. R. China.}
\email{ylyang@cqu.edu.cn}
\date{}
\begin{document}

\maketitle

\begin{abstract}
This paper aims at providing an exact algebro-geometric solution of the modified Camassa-Holm (mCH) equation derived from hyperelliptic curves in $4(p+q)-1$ genus.
To achieve this goal, we construct the Riemann-Hilbert problems cosponsoring to the mCH equation, which can be solved exactly by the Baker-Akhiezer function. Then the precise expression of the algebro-geometric solution of the mCH equation can be obtained through reconstructed formula.\\
{\bf Keywords:}    modified Camassa-Holm equation,   algebro-geometric  solution,  Riemann-Hilbert problem, Baker-Akhiezer function.\\[4pt]
{\bf MSC:} 35C05; 35Q15; 35Q35.
\end{abstract}

\tableofcontents

\section{Introduction}
In this paper, we consider the modified Camassa-Holm (mCH) equation:
\begin{equation}\label{mCH}
	m_t+\left((u^2-u_x^2)m\right)_x+\omega u_x = 0, \quad m\coloneqq u-u_{xx},
\end{equation} where $u=u(x,t)$ is the function in dimensionless space-time variables $(x,t)$, and $\omega\geq0$ is a  constant characterizing the effect of the linear dispersion.
The mCH equation \eqref{mCH} was first
presented by Fokas \cite{1995On} and Fuchssteiner using recursion operators \cite{1996Some}, and later found by Olver
and Rosenau \cite{Peter1996Tri} via tri-Hamiltonian duality to the bi-Hamiltonian of the mKdV equation (see
also \cite{Hou2017}, referred to as the Fokas-Olver-Rosenau-Qiao equation). It is noticed that the celebrated Camassa-Holm (CH) equation:
\begin{align} \label{ch}
	&m_t+ (um )_x+  u_x m+\omega u_x=0, \quad m=u-u_{x x}
\end{align}	
is the  tri-Hamiltonian duality to the bi-Hamiltonian of the KdV equation \cite{Peter1996Tri}, henceforth, the equation  \eqref{mCH} was referred  to the modified CH equation.

The algebro-geometric solution to integrable nonlinear PDEs, also known as finite-gap potential solutions, was originated from the investigation to the Cauchy problem for the Korteweg-de Vries (KdV) equation with periodic initial conditions. In 1974, Novikov and Dubrovin first linked the Cauchy problem for the KdV equation with periodic initial conditions to algebraic geometry \cite{Dub1974}. In 1975, Its and Matveev found that the finite-gap potential solutions to the KdV equation could be described by the inverse Jacobi problem on a two-sheeted Riemann surface via the spectral theory of periodic Schr\"odinger operators. The potential functions of the continuous spectrum could be represented by theta functions on the Riemann surface, namely the Its-Matveev's formula \cite{Its1975}. Cao, Geng proposed a method to construct algebraic-geometric solutions to integrable equations using the technique of Lax pair nonlinearization \cite{Cao1990,Geng1999}.
Gesztesy and Holden proposed  a systematic
approach to construct algebro-geometric solutions for (1+1) dimensional integrable hierarchy \cite{GesBookI,Ges2003,Ges2005,Ges2008,Ges2017}. Furthermore, this approach has laid the foundation for researching the stability and long-time asymptotic behavior in finite gap potential solutions \cite{Kam2007,Kru2009,Mik2012,Ego2018}. In \cite{Hou2017}, the algebro-geometric solutions to the mCH (namely FORQ) hierarchy \eqref{mCH} with $\omega=0$ are constructed. It is well-known that  the CH equation \eqref{ch} with $\omega\neq0$ can be transformed to the case of  $\omega=0$.  However, the mCH equation \eqref{mCH} with $\omega\neq0$ is a different integrable system of  the case of  $\omega=0$. Until now,  no results to the algebro-geometric solutions to the mCH equation with $\omega\neq0$  have been presented.

To all these algebraic-geometric approaches to the integrable systems, central element  is the so-called Baker-Akhiezer function, which is a meromorphic function on an appropriate Riemann surface introduced  in \cite{GesBookI}. In \cite{Kot-BA}, an approach to construct the Baker-Akhiezer function for nonlinear Schr\"odinger (NLS) equation via Riemann-Hilbert(RH) problem is presented.  It is feasible that one can describe a periodic background solution  by determining a Baker–Akhiezer function analog via appropriate comlplex arc jumps in the RH problem formalism. Several researches  demonstrate the viability of that method and show the power in  studies of integrable systems \cite{Zhao2020,Zhao2023,Kot2018,Kot2019,She2024,Feng2020}.

	
  In the study of long-time asymptotic behavior \cite{Deift1993},  the overall RH problem is frequently  decomposed into solvable model problems at  global and local scales via the Deift-Zhou steepest descent method. The presented paper results offer a  solvable RH problems that serve as global models, along with their corresponding finite-gap potential solution, which will facilitate research into the long-time asymptotic behavior of the mCH equation.
  Moreover, numerous studies \cite{DVZ1994,DZZ16,Gir2021,Buc2007,Iry2013} have demonstrated that finite-gap solutions arise in the asymptotic behavior of various scenarios such as collisionless shock wave transition regions, step-like initial conditions, and soliton gases. 
  In particular,  in the  research to soliton gases for the mCH equation \eqref{mCH}, the  asymptotic leading term in various regions are encompassed of the results of this paper, which  can be  shown as precise finite-gap solutions with concrete coefficients \cite{mCHgas}.
 
In this paper, we show the explicit   algebro-geometric solution for the mCH equation \eqref{mCH} with $\omega\neq0$  by constructing  the corresponding Baker-Akhiezer function via RH problem formalism. 
Based on the inverse scattering transform method, the investigation of solutions  essentially boils down to the study of scattering coefficients. The rest of this paper is organized as follows. In Section \ref{sec:1}, motivated by the previous researches,  we characterize the RH problem corresponding to  the mCH equation. 
We provides a sufficient condition under which an appropriate RH problem can yield an associated real and non-singular solution to the mCH equation \eqref{mCH}. Based on it, we present two RH problems  satisfying the previous stated conditions as examples which corresponding to two cases of explicit solutions.  
In Section  \ref{sec:3}, we  construct a explicitly solvable RH problem, through which we obtain the exactly expression of the finite-gap potential solution  associated with a genus-$(4(p+q)-1)$ hyperelliptic curve for the mCH equation \eqref{mCH} with $\omega\neq0$ in \eqref{AGsol 7}, where $p,q\geq 0$ are integers not all equal to zero.

\section{RH problem framework}\label{sec:1}

In this section, we aim at presenting the construction of a certain RH  problem formulated in the
complex plane with arcs constituting the jump contour,  whose solution can also solves the Lax pair for the mCH equation  \eqref{mCH}.

If we consider the transformation $x\mapsto x, t\mapsto 2t/\omega, u(x,t)\mapsto \sqrt{\omega/2}u(x,2t/\omega)$, the mCH equation \eqref{mCH} becomes 
\begin{equation*}
	m_t+\left((u^2-u_x^2)m\right)_x+2 u_x = 0.
\end{equation*}
So throughout this paper, without loss of generality, we take $\omega=2$.

It is well known that the mCH equation \eqref{mCH} is integrable, arising as the compatibility condition of a Lax pair of linear differential operators \cite{Sch1996}
\begin{equation}
	\Phi_x = X \Phi,\hspace{0.5cm}\Phi_t =T \Phi, \label{lax0}
\end{equation}
where
\begin{equation}
	X=-\frac{\I (\lambda-\lambda^{-1})}{4}\sigma_3+\frac{\I (\lambda+\lambda^{-1}) m}{2}\sigma_2,\nonumber
\end{equation}
\begin{equation}
	T=\frac{\I (\lambda-\lambda^{-1})}{2(\lambda+\lambda^{-1})^2}\sigma_3+\frac{\I (\lambda-\lambda^{-1})}{4} \left(u^{2}-u_{x}^{2}\right)\sigma_3-\I\left(\frac{2\I u-(\lambda-\lambda^{-1}) u_{x}}{2\lambda}+\frac{\lambda+\lambda^{-1}}{2} \left(u^{2}-u_{x}^{2}\right) m \right) \sigma_2.\nonumber
\end{equation}
 The matrices $\sigma_j$, $j=1,2,3$ are the Pauli matrices:
$$
\sigma_1=\begin{pmatrix}
	0&1\\1&0\\
\end{pmatrix},\sigma_2=\begin{pmatrix}
	0&-\mathrm{i}\\\mathrm{i}&0\\
\end{pmatrix},
\sigma_3=\begin{pmatrix}
	1&0\\0&-1\\
\end{pmatrix}.
$$
Noting that \eqref{mCH} is equivalent to the following conservation law type equation
\begin{align*}
	(q)_t+(q(u^2-u_x^2))_x=0
\end{align*}where
\begin{align*}
	q(x,t)=\sqrt{m(x,t)^2+1}.
\end{align*}Let 
\begin{equation}
	y(x,t)=x-\int_{x}^{+\infty} \left(q(s,t)-1\right) ds.\label{transy}
\end{equation}
Under variable $(y,t)$,  the mCH equation \eqref{mCH} and its corresponding  Lax pair  becomes
  \begin{align}\label{y-mCH}
	q_t+2q^2mu_y=0,\quad\quad m=u-q(qu_y)_y.
\end{align}
  and  \begin{align}
	&\Phi_y = \frac{1}{q}X \Phi,\hspace{0.5cm} \Phi_t = \big((u^2-q^2u_y^2)X+T\big) \Phi\label{y-lax}.\end{align} 

In this section, we begin with the RH formalism of the mCH equation \eqref{y-mCH}. The RH problem corresponding to solution to the mCH equation is determined by a piecewise smooth simple curve $\Gamma$ and $2\times2$ matrix valued jump function on $\Gamma$ as follows \cite{Yang2022}.
\begin{Rhp}  \label{RHP0}                       
\hfill
 	\begin{itemize}
 			\item  $M(\lambda):=M(\lb;y,t)$ is holomorphic in $\mathbb{C}\backslash\Gamma$, where $\Gamma$ is a piecewisely smooth oriented curve on the complex plane and admits that if a $\lb\in\Gamma$ then $\pm\lb^{\pm1}$, $\pm \overline{\lambda}^{\pm1}\in\Gamma$. Moreover, $\pm \I,\ 0\notin\Gamma$.
 			\item  $M(\lambda)$ satisfies jump condition \begin{align}\label{jump gas}
 				M_+(\lb)=M_-(\lb)\mathrm{e}^{-\I\theta(\lb)\hat{\sigma}_{3}}J(\lb),
 			\end{align} where
 			\begin{align}
 				\theta(\lb):=\theta(\lambda;y,t)=\frac{1}{4}(\lb-\lb^{-1})(y-8t(\lb+\lb^{-1})^{-2}).\label{def theta}
 			\end{align}
 			In addition, det$J(\lb)=1$ and $J(\lb)$ is required to admits some symmetry such that $M(\lambda)$  satisfies \begin{align}\label{sym0}
 				M(\lambda)=\sigma_1M(-\lambda)\sigma_1=\sigma_2\overline{M(\overline{\lambda})}\sigma_2=M(0)\sigma_3M(-\lambda^{-1})\sigma_3.
 			\end{align}
 				\item $M(\lambda)$ has lower than $-1/2$ singularity on the endpoints of $\Gamma$.
 			\item $M(\lb)\to I$ for $\lb\to\infty$.
 	\end{itemize}
\end{Rhp}

\begin{theorem}\label{DDL}
	For   $M(\lambda)$ satisfies an RH problem \ref{RHP0}, a real and non-singular solution $u(y,t)$ of mCH equation \eqref{y-mCH} is given by the following reconstruction formulae
	\begin{align}\label{recons}
			u(y,t)&=\lim_{\lb\to \I}\frac{1}{\lb-\I}\left(1-\frac{m_1(\lb;y,t)m_2(\lb;y,t)}{m_1(\I;y,t)m_2(\I;y,t)}\right),\qquad
			x(y,t)=y+\log(\frac{m_1(\I;y,t)}{m_2(\I;y,t)}),
	\end{align}where\begin{align*}
		\begin{pmatrix}
			m_1(\lb;y,t)&m_2(\lb;y,t)
		\end{pmatrix}=\begin{pmatrix}
			1&1
		\end{pmatrix}M(\lb;y,t).
	\end{align*}
\end{theorem}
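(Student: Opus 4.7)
The plan is to follow the standard dressing procedure: turn the RH data into a simultaneous fundamental solution of the Lax pair \eqref{y-lax}, then read off $u$ and the variable change from the asymptotic expansion of $M$.

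First, I would introduce $\Phi(\lambda;y,t) := M(\lambda;y,t)\,\mathrm{e}^{-\I\theta(\lambda;y,t)\sigma_3}$. Because the jump matrix $J(\lambda)$ in \eqref{jump gas} does not depend on $(y,t)$, one has $\Phi_+(\lambda)=\Phi_-(\lambda)J(\lambda)$ with a $(y,t)$-independent right factor. Consequently the logarithmic derivatives
\[
A(\lambda;y,t) := \Phi_y\Phi^{-1}, \qquad B(\lambda;y,t) := \Phi_t\Phi^{-1}
\]
have no jump across $\Gamma$. Since $M$ has at worst sub-$L^2$ singularities at the endpoints of $\Gamma$, the singularities of $A$ and $B$ there are removable, so $A$ and $B$ are meromorphic on $\mathbb{C}\mathbb{P}^{1}$ with possible poles only at the distinguished points $\{0,\infty,\pm\I\}$, i.e.\ rational functions of $\lambda$.

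Second, I would determine $A$ and $B$ explicitly by Laurent matching. From the definition \eqref{def theta}, $\theta\sim\tfrac{\lambda}{4}y$ as $\lambda\to\infty$ and $\theta\sim-\tfrac{1}{4\lambda}y$ as $\lambda\to 0$, while the $t$-part of $\theta$ develops double poles at $\lambda=\pm\I$. Combined with $M(\lambda)=I+M_1\lambda^{-1}+O(\lambda^{-2})$ at $\infty$ and the inversion symmetry $M(\lambda)=M(0)\sigma_3 M(-\lambda^{-1})\sigma_3$ from \eqref{sym0}, I can compute the principal parts of $A$ at $0,\infty$ and force them into the form $\tfrac{1}{q}X$ by reading off the coefficients $q(y,t)$ and $m(y,t)u(y,t)$ from appropriate entries of $M_1$ and $M(0)$. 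The symmetry $M(\lambda)=\sigma_1 M(-\lambda)\sigma_1$ produces the correct $\sigma_3/\sigma_2$ selection rule, and $M(\lambda)=\sigma_2\overline{M(\overline\lambda)}\sigma_2$ enforces that the extracted $u,m,q$ are real. A parallel analysis at $\pm\I$, where only $\theta_t$ is singular, produces the $(\lambda+\lambda^{-1})^{-2}$ pole of $T$ and thus identifies $B$ as $(u^2-q^2u_y^2)X+T$. Because $A$ and $B$ both arise from the same $\Phi$, the compatibility $A_t-B_y+[A,B]=0$ holds automatically, and equating coefficients gives exactly \eqref{y-mCH}.

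Third, I would derive the reconstruction formulae. The left row vector $(1,1)$ is the left eigenvector of $\sigma_1$, and the symmetry $M(\lambda)=\sigma_1 M(-\lambda)\sigma_1$ shows that $m_1(\lambda)m_2(\lambda) = (1,1)M(\lambda)(1,1)^{T}\cdot \text{(stuff)}$ is invariant under $\lambda\mapsto-\lambda$, hence is a function of $\lambda^2$; using the inversion symmetry one sees that it is also invariant under $\lambda\mapsto\lambda^{-1}$ (up to the factor $m_1(\I)m_2(\I)$ coming from $M(0)\sigma_3$). Expanding this invariant combination at $\lambda=\I$ produces a first-order zero whose coefficient is, by the matching in step two, precisely $u(y,t)$, which is the content of the first formula in \eqref{recons}. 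For the second formula, note that the transformation \eqref{transy} gives $x=y-\int_y^\infty(q(s,t)-1)\,ds$, so $x-y$ is controlled by the primitive of $q-1$. A direct computation using $A\sim-\tfrac{\I\lambda}{4}\sigma_3$ plus subleading terms at infinity shows that $\partial_y\log(m_1(\I)/m_2(\I))=q(y,t)-1$, which integrated against the normalization at $y\to+\infty$ gives $x(y,t)=y+\log(m_1(\I;y,t)/m_2(\I;y,t))$.

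The main obstacle is the middle step: the unusual inversion symmetry $M(\lambda)=M(0)\sigma_3 M(-\lambda^{-1})\sigma_3$ ties the behavior at $0$ and $\infty$ through conjugation by $M(0)\sigma_3$ rather than in a free normalization, and it is this that forces the evaluation at $\lambda=\I$ (a fixed point of $\lambda\mapsto-\lambda^{-1}$ together with $\lambda\mapsto-\lambda$) to act as the effective normalizer in the reconstruction formulae. Once the book-keeping of how $q$, $u$, and $m$ appear in the Laurent coefficients of $M$ at the four distinguished points is carried out consistently with all three symmetries in \eqref{sym0}, the identification of $A$ with $\tfrac{1}{q}X$ and of $B$ with $(u^2-q^2u_y^2)X+T$ is routine, and the reality/non-singularity of $u$ follow immediately from the Schwarz symmetry and the assumed solvability of the RH problem.
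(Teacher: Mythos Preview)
Your overall strategy---define $\Psi=M\mathrm{e}^{-\I\theta\sigma_3}$, observe that $\Psi_y\Psi^{-1}$ and $\Psi_t\Psi^{-1}$ have no jump and are therefore rational in $\lambda$, and read off the mCH equation from zero curvature---is exactly what the paper does. The gap is in your second step: the matrix $A=\Psi_y\Psi^{-1}$ is \emph{not} of the form $\tfrac{1}{q}X$. At $\lambda\to\infty$ one has $M\to I$ and $M_yM^{-1}\to 0$, so $A\sim-\tfrac{\I\lambda}{4}\sigma_3$; but $\tfrac{1}{q}X\sim-\tfrac{\I\lambda}{4q}\sigma_3+\tfrac{\I\lambda m}{2q}\sigma_2$, which disagrees unless $m=0$ and $q=1$. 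In fact the next term of $A$ at infinity is proportional to $\sigma_1$ (it comes from $[M_1,\sigma_3]$), not $\sigma_2$. The RH solution produces a Lax pair that is only gauge-equivalent to \eqref{y-lax}, so your plan to match $A$ to $\tfrac{1}{q}X$ and $B$ to $(u^2-q^2u_y^2)X+T$ term by term cannot succeed as stated. The paper avoids this by never attempting that match: it parameterizes $M(0)$, $M(\I)$ and $\partial_\lambda M(\I)$ explicitly through the symmetries \eqref{sym0} and $\det M=1$ (introducing real scalars $\beta_0,\eta_0,f_0,g_1,g_2$), writes down whatever $A$ and $B$ the Liouville argument actually yields in those variables, and then checks directly that $A_t-B_y+[A,B]=0$ is precisely \eqref{y-mCH} under the identifications $\tilde q=1/\beta_0$, $\tilde m=\eta_0/(\I\beta_0)$, $\tilde u=-g_1/f_0-2(\beta_0+1)f_0g_2$. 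The reconstruction formulae \eqref{recons} are then verified by inserting the parametrized $M(\I)$ and $\partial_\lambda M(\I)$ into them and recovering these same expressions. You would need either this explicit parametrization or a concrete gauge transformation to close the argument.

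A smaller slip: in your third step, $\partial_y\log\bigl(m_1(\I)/m_2(\I)\bigr)$ must equal $\tfrac{1}{q}-1$, not $q-1$; from \eqref{transy} one has $y_x=q$, hence $x_y=1/q$ (the paper records this as $\tilde x_y=1/\tilde q$).
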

  \begin{proof} 
  	Combining with det$J(\lb)=1$ and $M(\lb)\to I$ for $\lb\to\infty$, we have that det$M(\lb)\equiv1$.
Therefore, the symmetry of $M(\lb)$ in \eqref{sym0}  inspires us to denote
  	\begin{align*}
  		M(0)=\left(\begin{array}{cc}
  			\beta_0  & \eta_0\\
  			\eta_0	& \beta_0
  		\end{array}\right),
  		\hspace{0.3cm}	M(\I)=\left(\begin{array}{cc}
  			f_0  & \frac{\eta_0}{2f_0}\\
  			\frac{\beta_0-1}{\eta_0}f_0	& \frac{\beta_0+1}{2f_0}
  		\end{array}\right),
  		\hspace{0.3cm}\partial_\lambda M(\I)=\left(\begin{array}{cc}
  			\frac{\beta_0-1}{\eta_0}g_1  & g_2\\
  			g_1	& \frac{\beta_0-1}{\eta_0}g_2
  		\end{array}\right),
  	\end{align*}  	
  	where $\beta_0,\ f_0,\ g_1,\ g_2\in\R$, $\eta_0\in\I\R$ and $\beta_0^2-\eta_0^2=1$.
  	Define
\begin{align}
\Psi:=	\Psi(\lb;y,t)=M(\lb;y,t)e^{-\I\theta(\lb;y,t)\sigma_3},\label{psi}
\end{align}
	The definition of $\Psi$ implies that the jump of $\Psi$ is  independent of $y$ and $t$.  Consequently, $\Psi_y\Psi^{-1}$ and $\Psi_t\Psi^{-1}$ have no jump. We analyze $\Psi_y\Psi^{-1}$ first. A directly result from \eqref{psi} is that
\begin{align*}
	\Psi_y\Psi^{-1}=M_yM^{-1}-\frac{\I}{4}\left(\lb-\frac{1}{\lb} \right) M\sigma_3M^{-1},
\end{align*}
which  is a meromorphic function, with possible singularities at $\lb=0$ and $\lb=\infty$. Here, it is noticed that  $M(\lambda)$ has lower than $-1/2$ singularity on the endpoints of $\Gamma$, so $\Psi_y\Psi^{-1}$ does not have singularities at  the endpoints of $\Gamma$. As $\lb\to\infty$,
\begin{align*}
	\Psi_y\Psi^{-1}=-\frac{i}{4}\lb\sigma_3+\frac{\I}{2}\eta\sigma_1+\mathcal{O}(1/\lb),
\end{align*}
while as $\lb\to0$,
\begin{align*}
	\Psi_y\Psi^{-1}=\frac{\I}{4\lb}M(0)\sigma_3M(0)^{-1}+\mathcal{O}(1).
\end{align*}
Therefore, the function
\begin{align*}
	\Psi_y\Psi^{-1}-\frac{\I}{4\lb}M(0)\sigma_3M(0)^{-1}+ \frac{\I}{4}\lb\sigma_3-\frac{\I}{2}\eta\sigma_1
\end{align*}
is  holomorphic in $\mathbb{C}$ and vanish at $\lb\to\infty$. Then, by Liouville's theorem, it vanishes identically, which leads to the result $$ \Psi_y=A\Psi,$$
where
\begin{align*}
	A=&-\frac{\I\lambda}{4}\sigma_3+\frac{\I}{2}\eta\sigma_1+\frac{\I}{4\lambda}(\beta_0^2+\eta_0^2)\sigma_3+\frac{1}{2\lambda}\eta_0\beta_0\sigma_2.
\end{align*}
 Similarly,  	\begin{align*}
	\Psi_t\Psi^{-1}=M_tM^{-1}+2\I\frac{\lb(\lb^2-1)}{(\lb^2+1)^2} M\sigma_3M^{-1},
\end{align*}
is a meromorphic function, with possible singularities at $\lb=\pm i$. From the decomposition
\begin{align*}
	2i\frac{\lb(\lb^2-1)}{(\lb^2+1)^2}=\frac{\I}{\lb+\I}+\frac{\I}{\lb-\I}+\frac{1}{(\lb+\I)^2}-\frac{1}{(\lb-\I)^2},
\end{align*}
we obtain that $$ \Psi_t=B\Psi,$$
where
\begin{align*}
	B=&\frac{2\I(\lambda-\lambda^{-1})}{(\lambda+\lambda^{-1})^2}\left(\begin{array}{cc}
		\beta_0  & -\eta_0\\
		\eta_0	& -\beta_0
	\end{array}\right)
	-\frac{1}{\lambda-\I}\left(\begin{array}{cc}
		2(\frac{\beta_0-1}{\eta_0}g_2f_0+ \frac{\eta_0}{2f_0} g_1)& -2f_0g_2- 2\frac{\beta_0-1}{2f_0} g_1\\
		2\frac{\beta_0-1}{\beta_0+1} g_2f_0+\frac{\beta_0+1}{f_0} g_1	& -2(\frac{\beta_0-1}{\eta_0}g_2f_0+ \frac{\eta_0}{2f_0} g_1)
	\end{array}\right)\\
	&+\frac{1}{\lambda+\I}\left(\begin{array}{cc}
		-2(\frac{\beta_0-1}{\eta_0}g_2f_0+ \frac{\eta_0}{2f_0} g_1)& 2\frac{\beta_0-1}{\beta_0+1} g_2f_0+\frac{\beta_0+1}{f_0} g_1\\
		-2f_0g_2- 2\frac{\beta_0-1}{2f_0} g_1	& 2(\frac{\beta_0-1}{\eta_0}g_2f_0+ \frac{\eta_0}{2f_0} g_1)
	\end{array}\right),
\end{align*} 
 Using the  compatibility condition for the function $\Psi$ results in the compatibility equation
\begin{align*}
	A_t+AB-B_y-BA=0
\end{align*}
yields the mCH equation \eqref{y-mCH}$$	\tilde{q}_t+2\tilde{q}^2\tilde{m}\tilde{u}_y=0,\quad \tilde{q}=\sqrt{1+\tilde{m}^2},\quad \tilde{m}=\tilde{u}-\tilde{q}(\tilde{q}\tilde{u}_y)_y,$$ in the $(y,t)$ variables   via denoting
\begin{align*}
	&\tilde{u}=-\frac{g_1}{f_0}-2(\beta_0+1)f_0g_2,\hspace{0.5cm}\tilde{q}=\frac{1}{\beta_0},\hspace{0.5cm}\tilde{m}=\frac{\eta_0}{\I\beta_0}.
\end{align*} 
Furthermore, let 
\begin{align*}
	\tilde{x}_y=\frac{1}{\tilde{q}},
\end{align*}
above expression   coincides with the formulae \eqref{recons} under $\tilde{x}$ and $\tilde{u}$. Obviously, from its expression, $\tilde{u}$ is real and non-singular since $M(\I)$ is bounded.
\end{proof}

In the present work, we consider some appropriate cases of  $\Gamma,J(\lambda)$ such that  RH problem \ref{RHP0} is precisely solvable. Then the corresponding explicit solutions of the mCH equation \eqref{y-mCH} are derived by the reconstruction formulae \eqref{recons}.

\section{High Genus Algebro-Geometric solution} 
\label{sec:3}

In this chapter, we derive an algebro-geomertic solution $$u^{(AG)}(y,t;\textbf{P}_1,\textbf{P}_2,\textbf{A},\textbf{B}),$$ which satisfies \eqref{y-mCH} and determined by vector-valued parameters \begin{align}\label{para1}
	\textbf{P}_1=\begin{pmatrix}
		c_1&d_1&\cdots&c_p&d_p\\
	\end{pmatrix},\quad	\textbf{P}_2=\begin{pmatrix}
		a_1&b_1&\cdots&a_q&b_q\\
	\end{pmatrix},
\end{align} 
\begin{align}\label{para2}
	\textbf{A}=\begin{pmatrix}
		\alpha_1&\cdots&\alpha_p
	\end{pmatrix},\quad
	\textbf{B}=\begin{pmatrix}
		\beta_1&\cdots&\beta_q
	\end{pmatrix},
\end{align}with integers $p,q\in\mathbb{N}$  not all  zero,  and \begin{align}
	0<c_1<d_1<\cdots<c_p<d_p<\frac{\pi}{2},\quad 0<a_1<b_1<\cdots<a_q<b_q<1,\quad
	\alpha_1,\cdots,\alpha_p,\beta_1,\cdots,\beta_q\neq0.
\end{align}
Regarding the notations above, we replace $\textbf{0}$ with $\textbf{P}_1$ and $\textbf{A}$ ($\textbf{P}_2$ and $\textbf{B}$) when $p=0$ ($q=0$).
The parameters in $\textbf{P}_1$ and  $\textbf{P}_2$ determine a  hyperelliptic curve $\mathcal{R}$ with genus $4(p+q)-1$ defined by $(\lambda,R(\lambda))$, in where $R(\lambda)\sim\lambda^{4(p+q)}$ as $\lambda\to\infty,$ and\small \begin{align*}
	R(\lambda)^2=\prod_{l=1}^{p}(\lambda^4-2\cos(2c_l)\lambda^2+1)(\lambda^4-2\cos(2d_l)\lambda^2+1)\prod_{j=1}^{q}(\lambda^4-(a_j^2+a_j^{-2})\lambda^2+1)(\lambda^4-(b_j^2+b_j^{-2})\lambda^2+1).
\end{align*}\normalsize
Then the  hyperelliptic curve $\mathcal{R}$ is two sheets glued along the branch curve   $\Gamma$, where \begin{align*}
	\Gamma:=&\left\{\lambda\in\C;\left\{\lambda,\bar{\lambda},-\lambda,-\bar{\lambda},\lambda^{-1},\bar{\lambda}^{-1},-\lambda^{-1},-\bar{\lambda}^{-1}\right\}\cap\Gamma^{\dagger}\neq\emptyset\right\},\\
	\Gamma^{\dagger}:=&\bigcup_{l=1}^{p}\left\{\lambda\in\C;|\lambda|=1, \arg\lambda\in[c_l,d_l]\right\}\cup \bigcup_{j=1}^{q}\left\{\lambda\in\I\R;|\lambda|\in[a_j,b_j]\right\}.
\end{align*} 
\begin{figure}[h]
	\begin{minipage}{0.48\textwidth}
		\footnotesize	\begin{tikzpicture} [xscale=1.3,yscale=1.3]
			\draw[dashed] (-2.4,0)--(2.4,0);
			\draw[dashed] (0,-3.5)--(0,3.5);
			\draw[dashed](2,0) arc [start angle=0,end angle=360,radius=2];

			\fill[rotate=-55,white] (-0.2,1.9)--(0.2,1.9)--(0.2,2.1)--(-0.2,2.1); \draw[rotate=-55](0,2)node{\normalcolor$\cdots$};
			\draw[rotate=-30,->-=.6,thick,red] (0,2)node[above]{\normalcolor$\mathrm{e}^{\I d_p}$}arc(90:75:2)node[right]{\normalcolor$\mathrm{e}^{\I c_p}$};
			\draw[rotate=-65,->-=.6,thick,red] (0,2)node[right]{\normalcolor$\mathrm{e}^{\I d_1}$}arc(90:75:2)node[right]{\normalcolor$\mathrm{e}^{\I c_1}$};
			\draw[red] (1.3,1.7)node[right]{$\Gamma_{2p+3q}$};
			\draw[red] (1.9,.6)node[right]{$\Gamma_{3p+3q-1}$};

			\fill[rotate=55,white] (-0.25,1.9)--(0.25,1.9)--(0.25,2.1)--(-0.25,2.1); \draw[rotate=55](0,2)node{\normalcolor$\cdots$};
			\draw[rotate=45,->-=.6,thick,red] (0,2)node[left]{\normalcolor$-\mathrm{e}^{-\I d_p}$}arc(90:75:2)node[above]{\normalcolor$-\mathrm{e}^{-\I c_p}$};
			\draw[rotate=80,->-=.6,thick,red] (0,2)node[left]{\normalcolor$-\mathrm{e}^{-\I d_1}$}arc(90:75:2)node[left]{\normalcolor$-\mathrm{e}^{-\I c_1}$};
			\draw[red] (-1.3,1.7)node[left]{$\Gamma_{q}$};
			\draw[red] (-1.9,.6)node[left]{$\Gamma_{p+q-1}$};
			
			\fill[rotate=-125,white] (-0.25,1.9)--(0.25,1.9)--(0.25,2.1)--(-0.25,2.1); \draw[rotate=-125](0,2)node{\normalcolor$\cdots$};
			\draw[rotate=-135,->-=.6,thick,red] (0,2)node[right]{\normalcolor$\mathrm{e}^{-\I d_p}$}arc(90:75:2)node[below]{\normalcolor$\mathrm{e}^{-\I c_p}$};
			\draw[rotate=-100,->-=.6,thick,red] (0,2)node[right]{\normalcolor$\mathrm{e}^{-\I d_1}$}arc(90:75:2)node[right]{\normalcolor$\mathrm{e}^{-\I c_1}$};
			\draw[red] (1.3,-1.7)node[right]{$\Gamma_{4p+3q-1}$};
			\draw[red] (1.9,-.6)node[right]{$\Gamma_{3p+3q}$};

			\fill[rotate=125,white] (-0.25,1.9)--(0.25,1.9)--(0.25,2.1)--(-0.25,2.1); \draw[rotate=125](0,2)node{$\cdots$};
			\draw[rotate=150,->-=.6,thick,red] (0,2)node[below]{\normalcolor$-\mathrm{e}^{\I d_p}$}arc(90:75:2)node[left]{\normalcolor$-\mathrm{e}^{\I c_p}$};
			\draw[rotate=115,->-=.6,thick,red] (0,2)node[left]{\normalcolor$-\mathrm{e}^{\I d_1}$}arc(90:75:2)node[left]{\normalcolor$-\mathrm{e}^{\I c_1}$};
			\draw[red] (-1.3,-1.7)node[left]{$\Gamma_{2p+q-1}$};
			\draw[red] (-1.9,-.6)node[left]{$\Gamma_{p+q}$};
			
			\draw[->-=.6,thick,red] (0,0.4)node[left]{\normalcolor$\I a_1$}--(0,0.8)node[left]{\normalcolor$\I b_1$};
			\fill[white] (0.1,0.9)--(0.1,1.3)--(-0.1,1.3)--(-0.1,0.9);
			\draw (0,1.1)node{\normalcolor$\cdots$};
			\draw[->-=.6,thick,red] (0,1.4)node[left]{\normalcolor$\I a_q$}--(0,1.8)node[left]{\normalcolor$\I b_q$};
			\draw[red] (0.1,1.6)node[right]{ $\Gamma_{2p+3q-1}$};
			\draw[red] (0.1,.6)node[right]{ $\Gamma_{2p+2q}$};
			
			\draw[-<-=.6,thick,red] (0,-0.4)node[left]{\normalcolor$-\I a_1$}--(0,-0.8)node[left]{\normalcolor$-\I b_1$};
			\fill[white] (0.1,-0.9)--(0.1,-1.3)--(-0.1,-1.3)--(-0.1,-0.9);
			\draw (0,-1.1)node{\normalcolor$\cdots$};
			\draw[-<-=.6,thick,red] (0,-1.4)node[left]{\normalcolor$-\I a_q$}--(0,-1.8)node[left]{\normalcolor$-\I b_q$};
			\draw[red] (0.1,-1.6)node[right]{ $\Gamma_{2p+q}$};
			\draw[red] (0.1,-.6)node[right]{ $\Gamma_{2p+2q-1}$};
			
			\draw[-<-=.6,thick,red] (0,3.4)node[left]{\normalcolor$\I a_1^{-1}$}--(0,3)node[left]{\normalcolor$\I b_1^{-1}$};
			\fill[white] (0.1,2.9)--(0.1,2.7)--(-0.1,2.7)--(-0.1,2.9);
			\draw (0,2.8)node{\normalcolor$\cdots$};
			\draw[-<-=.6,thick,red] (0,2.6)node[right]{\normalcolor$\I a_q^{-1}$}--(0,2.2)node[right]{\normalcolor$\I b_q^{-1}$};
			\draw[red] (0.1,3.2)node[right]{ $\Gamma_{0}$};
			\draw[red] (-0.1,2.4)node[left]{ $\Gamma_{q-1}$};
			
			\draw[->-=.6,thick,red] (0,-3.4)node[left]{\normalcolor$-\I a_1^{-1}$}--(0,-3)node[left]{\normalcolor$-\I b_1^{-1}$};
			\fill[white] (0.1,-2.9)--(0.1,-2.7)--(-0.1,-2.7)--(-0.1,-2.9);
			\draw (0,-2.8)node{\normalcolor$\cdots$};
			\draw[->-=.6,thick,red] (0,-2.6)node[right]{\normalcolor$-\I a_q^{-1}$}--(0,-2.2)node[right]{\normalcolor$-\I b_q^{-1}$};
			\draw[red] (0.1,-3.2)node[right]{ $\Gamma_{4p+4q-1}$};
			\draw[red] (-0.1,-2.4)node[left]{ $\Gamma_{4p+3q}$};

		\end{tikzpicture}\caption*{(a)}
	\end{minipage}
	\begin{minipage}{0.48\textwidth}
		\begin{tikzpicture}[yscale=1.6]
			\draw[white] (-3,0)--(3,0);
			\draw[red,thick](0,2)--(0,1.6);\draw[red] (0.2,1.8)node[right]{$\Gamma_{0}$};
			
			\draw[color={rgb, 255:red, 74; green, 144; blue, 226 },thick,rounded corners,-<-=.5] (0,1.6)--(-.3,1.6)--(-.3,1.2)--(0,1.2);
			\draw[color={rgb, 255:red, 74; green, 144; blue, 226 },thick,densely dashed,rounded corners,-<-=.5] (0,1.2)--(.3,1.2)--(.3,1.6)--(0,1.6);
			\draw[color={rgb, 255:red, 74; green, 144; blue, 226 }] (-.3,1.4) node[left]{$  \mathfrak{a}_1$};
			
			\draw[red,thick](0,1.2)--(0,0.8);\draw[red] (0.2,1)node[right]{$\Gamma_{1}$};
			
			\draw[color={rgb, 255:red, 245; green, 166; blue, 35 },thick,rounded corners,->-=.15]  (-0.2,0.6)--(-0.2,1.4)--(0.2,1.4)--(0.2,0.6)--cycle;
			\draw[color={rgb, 255:red, 245; green, 166; blue, 35 }] (-.2,1)node[left]{$  \mathfrak{b}_1$};
			
			\draw[color={rgb, 255:red, 74; green, 144; blue, 226 },thick,rounded corners,-<-=.5] (0,.8)--(-.3,.8)--(-.3,.4)--(0,.4);
			\draw[color={rgb, 255:red, 74; green, 144; blue, 226 },thick,densely dashed,rounded corners,->-=.5] (0,.8)--(.3,.8)--(.3,.4)--(0,.4);
			\draw[color={rgb, 255:red, 74; green, 144; blue, 226 }] (-.3,.6) node[left]{$  \mathfrak{a}_2$};
			
			\draw[red,thick](0,.4)--(0,0);\draw[red] (0.2,0.2)node[right]{$\Gamma_{2}$};
			\draw (0,-.4)node{$\cdots$};\draw (0,-.8)node{$\cdots$};
			
			\draw[yshift=-90,red,thick](0,2)--(0,1.6);\draw[yshift=-90,red] (0.2,1.8)node[right]{$\Gamma_{4p+4q-3}$};
			
			\draw[yshift=-90,color={rgb, 255:red, 74; green, 144; blue, 226 },thick,rounded corners,-<-=.5] (0,1.6)--(-.3,1.6)--(-.3,1.2)--(0,1.2);
			\draw[yshift=-90,color={rgb, 255:red, 74; green, 144; blue, 226 },thick,densely dashed,rounded corners,-<-=.5] (0,1.2)--(.3,1.2)--(.3,1.6)--(0,1.6);
			\draw[yshift=-90,color={rgb, 255:red, 74; green, 144; blue, 226 }] (-.3,1.4) node[left]{$  \mathfrak{a}_{4p+4q-2}$};
			
			\draw[yshift=-90,red,thick](0,1.2)--(0,0.8);\draw[yshift=-90,red] (0.2,1)node[right]{$\Gamma_{4p+4q-2}$};
			
			\draw[yshift=-90,color={rgb, 255:red, 245; green, 166; blue, 35 },thick,rounded corners,->-=.15]  (-0.2,0.7)--(-0.2,1.3)--(0.2,1.3)--(0.2,0.7)--cycle;
			\draw[yshift=-90,color={rgb, 255:red, 245; green, 166; blue, 35 }] (-.2,1)node[left]{$  \mathfrak{b}_{4p+4q-2}$};
			
			\draw[yshift=-90,color={rgb, 255:red, 74; green, 144; blue, 226 },thick,rounded corners,-<-=.5] (0,.8)--(-.3,.8)--(-.3,.4)--(0,.4);
			\draw[yshift=-90,color={rgb, 255:red, 74; green, 144; blue, 226 },thick,densely dashed,rounded corners,->-=.5] (0,.8)--(.3,.8)--(.3,.4)--(0,.4);
			\draw[yshift=-90,color={rgb, 255:red, 74; green, 144; blue, 226 }] (-.3,.6) node[left]{$  \mathfrak{a}_{4p+4q-1}$};
			
			\draw[yshift=-90,red,thick](0,.4)--(0,0);\draw[yshift=-90,red] (0.2,0.2)node[right]{$\Gamma_{4p+4q-1}$};
			\draw[yshift=-90,color={rgb, 255:red, 245; green, 166; blue, 35 },thick,rounded corners,->-=.15]  (-0.2,-.1)--(-0.2,.5)--(0.2,.5)--(0.2,-.1)--cycle;
			\draw[yshift=-90,color={rgb, 255:red, 245; green, 166; blue, 35 }] (-.2,.2)node[left]{$  \mathfrak{b}_{4p+4q-1}$};
			
		\end{tikzpicture}
		\caption*{(b)}
	\end{minipage}
	\caption{(a) Jump curve $\Gamma$ of RH problem \ref{RHP0 g}; (b) Sketch for  homology basis $\mathfrak{a}_{j}$, $\mathfrak{b}_{j}$ of $\mathcal{R}$.}\label{Gamma curve}
	\normalsize
\end{figure}
  Thus we have $\Gamma=\bigcup_{j=0}^{4p+4q-1}\Gamma_j$ with
\begin{align}
	\Gamma_{j}:=\left\{\begin{array}{lllll}
		\left\{\lambda\in\I\R;-\I\lambda\in[b_{j+1}^{-1},a_{j+1}^{-1}]\right\},\quad &j=0,\cdots,q-1,\\
		\left\{\lambda\in\C;|\lambda|=1, \arg\lambda\in[\pi-d_{q+p-j},\pi-c_{q+p-j}]\right\},\quad &j=q,\cdots,p+q-1, \\
		\left\{\lambda\in\C;|\lambda|=1, \arg\lambda\in	[\pi+c_{j-p-q+1},\pi+d_{j-p-q+1}]\right\},\quad &j=p+q,\cdots,2p+q-1,\\
		\left\{\lambda\in\I\R;-\I\lambda\in[-b_{2p+2q-j},-a_{2p+2q-j}]\right\},\quad &j=2p+q,\cdots,2p+2q-1,\\
		\left\{\lambda\in\I\R;-\I\lambda\in[a_{j-2p-2q+1},b_{j-2p-2q+1}]\right\},\quad &j=2p+2q,\cdots,2p+3q-1,\\
		\left\{\lambda\in\C;|\lambda|=1, \arg\lambda\in[c_{3q+3p-j},d_{3q+3p-j}]\right\}, \quad &j=2p+3q,\cdots,3p+3q-1, \\
		\left\{\lambda\in\C;|\lambda|=1, \arg\lambda\in[-d_{j-3p-3q+1},-c_{j-3p-3q+1}]\right\},\quad &j=3p+3q,\cdots,4p+3q-1,\\
		\left\{\lambda\in\I\R;-\I\lambda\in[-a_{4p+4q-j}^{-1},-b_{4p+4q-j}^{-1}]\right\},&j=4p+3q,\cdots,4p+4q-1,\\	\end{array}\right.
\end{align} which is shown with the orientation of $\Gamma$ in the Figure \ref{Gamma curve} (a).  Basis $\mathfrak{a}_{j}$, $\mathfrak{b}_{j}$ of the first homology group of $\mathcal{R}$ are defined according to the Figure \ref{Gamma curve} (b) following the subscript order of $\Gamma_j$. 
Via these foundations, we generalize   the aforementioned RH problem \ref{RHP0} for the mCH equation \ref{mCH} to the following solvable RH problem \ref{RHP0 g}. Since the same reason it is also an example of the RH problem \ref{RHP0}. Using the Theorem \ref{DDL} and  solvability of this case, we derive an explicit algebro-geometric solution of \eqref{y-mCH} with high genus. 
\begin{Rhp}\label{RHP0 g} 	\hfill
	\begin{itemize}
		\item  $M(\lambda)$ is analytic on $\lambda\in\mathbb{C}\backslash\Gamma$.
		\item $M(\lambda)$ satisfies  jump condition
		\begin{align}
			M_+(\lambda) =M_-(\lambda)\left\{\begin{array}{llll}
				\begin{pmatrix}
					0&-\beta_j^{-1}\mathrm{e}^{-2\I\theta(\lambda)} \\\beta_j\mathrm{e}^{2\I\theta(\lambda)}&0\\
				\end{pmatrix},\quad \lambda\in\Gamma_{j-1}\cup\Gamma_{j+2p+2q-1},j=1,\cdots,q;\\
				\begin{pmatrix}
					0&\I\alpha_j^{-1}\mathrm{e}^{-2\I\theta(\lambda)} \\\I\alpha_j\mathrm{e}^{2\I\theta(\lambda)}&0\\
				\end{pmatrix}, \quad \lambda\in\Gamma_{p+q-j}\cup\Gamma_{3p+3q-j},j=1,\cdots,p;\\
				\begin{pmatrix}
					0&\I\alpha_j\mathrm{e}^{-2\I\theta(\lambda)} \\\I\alpha_j^{-1}\mathrm{e}^{2\I\theta(\lambda)}&0\\
				\end{pmatrix},\quad \lambda\in\Gamma_{j+p+q-1}\cup\Gamma_{j+3p+3q-1},j=1,\cdots,p;\\
				\begin{pmatrix}
					0&-\beta_j\mathrm{e}^{-2\I\theta(\lambda)} \\\beta_j^{-1}\mathrm{e}^{2\I\theta(\lambda)}&0\\
				\end{pmatrix},\quad \lambda\in\Gamma_{2p+2q-j}\cup\Gamma_{4p+4q-j},j=1,\cdots,q.\\	\end{array}\right.
		\end{align}
		\item $M(\lambda)$ has at most $-1/4$ singularity on the endpoints of $\Gamma$.
		\item As $ \lambda\to\infty$, $M(\lambda)=I+\mathcal{O}(\lambda^{-1}).$
	\end{itemize}
\end{Rhp}
The following  Figure \ref{FigR}-\ref{Fig g=11}  presents two examples for $p=q=1$ and $p=2$, $q=1$  with corresponding jump condition for RH problem \ref{RHP0 g}, hyperelliptic curve $\mathcal{R}$ and basis $\mathfrak{a}_{j}$, $\mathfrak{b}_{j}$, $j=1,\cdots,4p+4q-1$.
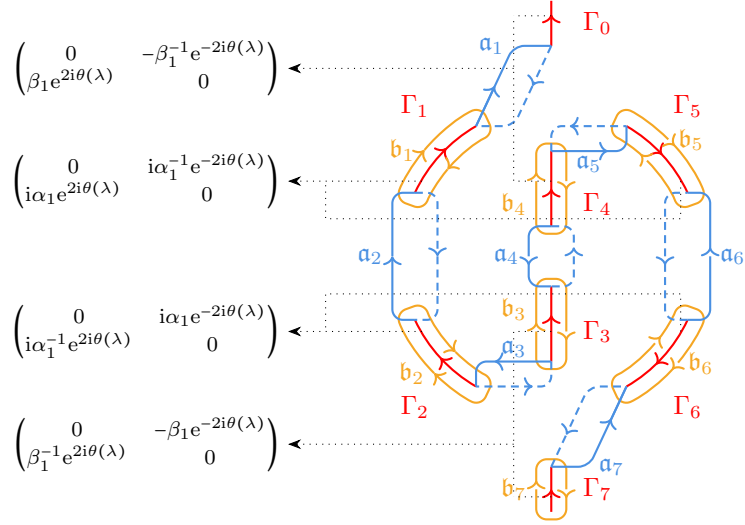
\begin{figure}[h]
	\begin{tikzpicture} [xscale=1,yscale=1]

		
		\draw [red,->-=0.5,thick] (1,1.7320508075688772935274463415059) arc (60:25:2) ;
		
		\draw [color={rgb, 255:red, 245; green, 166; blue, 35 },thick,rounded corners,>-] (1.2728,1.2728)arc(45:65:1.8)--(0.930,1.994) arc (65:45:2.2)node [right]{$ \mathfrak{b}_5$};
		\draw [color={rgb, 255:red, 245; green, 166; blue, 35 },thick,rounded corners,>-](1.556,1.556)arc (45:20:2.2)--(1.691,0.616)arc (20:45:1.8);

		\draw [red](0.3,3.1)node[right]{$\Gamma_0$};
		\draw[red](-1.5,2)node[left]{$\Gamma_1$};
		\draw[red](-1.5,-2)node[left]{$\Gamma_2$};
		\draw 
		[red](0.3,-3.1)node[right]{$\Gamma_7$};
		
		\draw [red](0.3,0.7)node[right]{$\Gamma_4$};
		\draw [red](0.3,-1.)node[right]{$\Gamma_3$};
		\draw[red](1.5,2)node[right]{$\Gamma_5$};
		\draw[red](1.5,-2)node[right]{$\Gamma_6$};
		
		\draw [red,-<-=0.5,thick] (-1,1.7320508075688772935274463415059) arc (120:155:2);
		\draw [color={rgb, 255:red, 245; green, 166; blue, 35 },thick,rotate=95,rounded corners,->] (1.2728,1.2728)arc(45:65:1.8)--(0.930,1.994) arc (65:45:2.2)node [left]{$ \mathfrak{b}_1$};
		\draw [color={rgb, 255:red, 245; green, 166; blue, 35 },thick,rotate=95,rounded corners,->](1.556,1.556)arc (45:20:2.2)--(1.691,0.616)arc (20:45:1.8);
		
		\draw [red,->-=0.5,thick] (-1,-1.7320508075688772935274463415059) arc (240:205:2);
		\draw [color={rgb, 255:red, 245; green, 166; blue, 35 },thick,rotate=180,rounded corners,->] (1.2728,1.2728)arc(45:65:1.8)--(0.930,1.994) arc (65:45:2.2)node [left]{$ \mathfrak{b}_2$};
		\draw [color={rgb, 255:red, 245; green, 166; blue, 35 },thick,rotate=180,rounded corners,->](1.556,1.556)arc (45:20:2.2)--(1.691,0.616)arc (20:45:1.8);
		
		\draw [red,-<-=0.5,thick] (1,-1.7320508075688772935274463415059) arc (-60:-25:2);	
		\draw [color={rgb, 255:red, 245; green, 166; blue, 35 },thick,rotate=-85,rounded corners,>-] (1.2728,1.2728)arc(45:65:1.8)--(0.930,1.994) arc (65:45:2.2)node [right]{$ \mathfrak{b}_6$};
		\draw [color={rgb, 255:red, 245; green, 166; blue, 35 },thick,rotate=-85,rounded corners,>-](1.556,1.556)arc (45:20:2.2)--(1.691,0.616)arc (20:45:1.8);
		
		\draw [red,-<-=0.5,thick] (0,3.4)--(0,2.8);
		\draw [red,->-=0.5,thick] (0,0.4)--(0,1.4) ;
		
		\draw [red,-<-=0.5,thick] (0,-0.4)--(0,-1.4) ;
		\draw [red,->-=0.5,thick] (0,-3.4)--(0,-2.8);
		
		\draw[color={rgb, 255:red, 245; green, 166; blue, 35 },thick,rounded corners,>-]  (-0.2,1)--(-0.2,1.5)--(0.2,1.5)--(0.2,1);\draw[color={rgb, 255:red, 245; green, 166; blue, 35 },thick,rounded corners,>-]  (0.2,1)--(0.2,0.3)--(-0.2,0.3)--(-0.2,1);
		\draw[color={rgb, 255:red, 245; green, 166; blue, 35 },thick,](-0.2,0.7)node[left]{$ \mathfrak{b}_4$};
		\draw[color={rgb, 255:red, 245; green, 166; blue, 35 },thick,rounded corners,-<]  (-0.2,-1)--(-0.2,-1.5)--(0.2,-1.5)--(0.2,-1);\draw[color={rgb, 255:red, 245; green, 166; blue, 35 },thick,rounded corners,-<]  (0.2,-1)--(0.2,-0.3)--(-0.2,-0.3)--(-0.2,-1);\draw[color={rgb, 255:red, 245; green, 166; blue, 35 },thick,](-0.2,-0.7)node[left]{$ \mathfrak{b}_3$};
		\draw[color={rgb, 255:red, 245; green, 166; blue, 35 },thick,rounded corners,-<]  (-0.2,-3.1)--(-0.2,-3.5)--(0.2,-3.5)--(0.2,-3.1);\draw[color={rgb, 255:red, 245; green, 166; blue, 35 },thick,rounded corners,-<]  (0.2,-3.1)--(0.2,-2.7)--(-0.2,-2.7)--(-0.2,-3.1)node[left]{$ \mathfrak{b}_7$};
		
		\draw[color={rgb, 255:red, 74; green, 144; blue, 226 },thick,rounded corners] (-1,1.732) --(-0.5,2.8)node[left]{$  \mathfrak{a}_1$}--(0,2.8) ;
		\draw [color={rgb, 255:red, 74; green, 144; blue, 226 },thick,rounded corners,<-](-0.75,2.266) --(-1,1.732);
		\draw[color={rgb, 255:red, 74; green, 144; blue, 226 },thick,rounded corners,densely dashed,->] (0,2.8) --(-0.25,2.266);
		\draw [color={rgb, 255:red, 74; green, 144; blue, 226 },thick,rounded 	corners,densely dashed](-0.25,2.266)--(-0.5,1.732)--(-1,1.732);
		
		\draw[color={rgb, 255:red, 74; green, 144; blue, 226 },thick,rotate=180,rounded corners] (-1,1.732) --(-0.5,2.8)node[right]{$  \mathfrak{a}_7$}--(0,2.8) ;
		\draw [color={rgb, 255:red, 74; green, 144; blue, 226 },thick,rotate=180,rounded corners,-<](-1,1.732)--(-0.75,2.266) ;
		\draw[color={rgb, 255:red, 74; green, 144; blue, 226 },thick,rotate=180,rounded corners,densely dashed,-<] (0,2.8) --(-0.25,2.266);
		\draw [color={rgb, 255:red, 74; green, 144; blue, 226 },thick,rotate=180,rounded 	corners,densely dashed](-0.25,2.266)--(-0.5,1.732)--(-1,1.732);
		
		\draw[color={rgb, 255:red, 74; green, 144; blue, 226 },thick,rounded corners] (-1.8126,0.8452)--(-2.1126,0.8452)--(-2.1126,0) node[left]{$  \mathfrak{a}_2$};
		\draw [color={rgb, 255:red, 74; green, 144; blue, 226 },thick,rounded corners,<-](-2.1126,0)--(-2.1126,-0.8452)--(-1.8126,-0.8452);
		\draw[color={rgb, 255:red, 74; green, 144; blue, 226 },thick,rounded corners,densely dashed,->] (-1.8126,0.8452)--(-1.5126,0.8452)--(-1.5126,0);
		\draw [color={rgb, 255:red, 74; green, 144; blue, 226 },thick,rounded 	corners,densely dashed](-1.5126,0)--(-1.5126,-0.8452)--(-1.8126,-0.8452);
		
		\draw[color={rgb, 255:red, 74; green, 144; blue, 226 },thick,rotate=180,rounded corners] (-1.8126,0.8452)--(-2.1126,0.8452)--(-2.1126,0) node[right]{$  \mathfrak{a}_6$};
		\draw [color={rgb, 255:red, 74; green, 144; blue, 226 },thick,rotate=180,rounded corners,>-](-2.1126,0)--(-2.1126,-0.8452)--(-1.8126,-0.8452);
		\draw[color={rgb, 255:red, 74; green, 144; blue, 226 },thick,rotate=180,rounded corners,densely dashed,-<] (-1.8126,0.8452)--(-1.5126,0.8452)--(-1.5126,0);
		\draw [color={rgb, 255:red, 74; green, 144; blue, 226 },thick,rotate=180,rounded 	corners,densely dashed](-1.5126,0)--(-1.5126,-0.8452)--(-1.8126,-0.8452);
		
		\draw[color={rgb, 255:red, 74; green, 144; blue, 226 },thick,rounded corners] (0,0.4)--(-0.3,0.4)--(-0.3,0) node[left]{$  \mathfrak{a}_4$};
		\draw [color={rgb, 255:red, 74; green, 144; blue, 226 },thick,rounded corners,>-](-0.3,0)--(-0.3,-0.4)--(0,-0.4);
		\draw[color={rgb, 255:red, 74; green, 144; blue, 226 },thick,rounded corners,densely dashed,-<] (0,0.4)--(0.3,0.4)--(0.3,0);
		\draw [color={rgb, 255:red, 74; green, 144; blue, 226 },thick,rounded 	corners,densely dashed](0.3,0)--(0.3,-0.4)--(0,-0.4);
		
		\draw[color={rgb, 255:red, 74; green, 144; blue, 226 },thick,rounded corners,-<-=0.5] (-1,-1.732) --(-1,-1.4)--(0,-1.4);
		\draw [color={rgb, 255:red, 74; green, 144; blue, 226 },thick,rounded 	corners,densely dashed,-<-=0.5](-0.,-1.4)--(-0,-1.732)--(-1,-1.732) (-0.2,-1.232)node[left]{$  \mathfrak{a}_3$};

		\draw[color={rgb, 255:red, 74; green, 144; blue, 226 },thick,rotate=180,rounded corners,-<-=0.5] (-1,-1.732) --(-1,-1.4)--(0,-1.4);
		\draw [color={rgb, 255:red, 74; green, 144; blue, 226 },thick,rotate=180, rounded 	corners,densely dashed,-<-=0.5](-0.,-1.4)--(-0,-1.732)--(-1,-1.732) (-0.2,-1.232)node[right]{$  \mathfrak{a}_5$};
		
		\draw[-{Stealth[length=5]}, dotted] (-1.72,1)--(-3.5,1)node[left] {\footnotesize$\begin{pmatrix}
				0&\I\alpha_1^{-1}\mathrm{e}^{-2\I\theta(\lambda)} \\\I\alpha_1\mathrm{e}^{2\I\theta(\lambda)}&0\\
			\end{pmatrix}$};  
		\draw[ -, dotted] (1.72,1)-- (1.72,0.5)--(-3,0.5)--(-3,1);
		
		\draw[-{Stealth[length=5]}, dotted] (-1.72,-1)--(-3.5,-1)node[left] {\footnotesize$\begin{pmatrix}
				0&\I\alpha_1\mathrm{e}^{-2\I\theta(\lambda)} \\\I\alpha_1^{-1}\mathrm{e}^{2\I\theta(\lambda)}&0\\
			\end{pmatrix}$};  
		\draw[-, dotted] (1.72,-1)-- (1.72,-0.5)--(-3,-0.5)--(-3,-1);

		\draw[-, dotted] (0,3.2)-- (-.5,3.2);
		\draw [-{Stealth[length=5]}, dotted] (-.5,2.5)--(-3.5,2.5)node[left] {\footnotesize$\begin{pmatrix}
				0&-\beta_1^{-1}\mathrm{e}^{-2\I\theta(\lambda)} \\\beta_1\mathrm{e}^{2\I\theta(\lambda)}&0\\
			\end{pmatrix}$};
		\draw[ dotted] (0,1)--(-0.5,1)--(-0.5,3.2);
		
		\draw[-, dotted] (0,-3.2)-- (-.5,-3.2);
		\draw[-{Stealth[length=5]}, dotted] (-.5,-2.5)--(-3.5,-2.5)node[left] {\footnotesize$\begin{pmatrix}
				0&-\beta_1\mathrm{e}^{-2\I\theta(\lambda)} \\\beta_1^{-1}\mathrm{e}^{2\I\theta(\lambda)}&0\\
			\end{pmatrix}$}; 
		\draw[ dotted] (0,-1)--(-0.5,-1)--(-0.5,-3.2);
		
	\end{tikzpicture}
	\caption{ Jump curve  of the RH problem \ref{RHP0 g},  $\mathcal{R}$  and  basis of  its first homology group for $p=q=1$ case.} \label{FigR}
	
\end{figure}
	\begin{figure}[h]
	\begin{tikzpicture} [xscale=1.3,yscale=1.3]

		%
		
		
		\draw [color={rgb, 255:red, 245; green, 166; blue, 35 },thick,rotate=-10,red,->-=0.5,thick](1.414,1.414)arc(45:25:2);
		\draw [color={rgb, 255:red, 245; green, 166; blue, 35 },thick,rotate=-7.5,rounded corners,->-=.2](1.556,1.556)arc (45:20:2.2)--(1.691,0.616)arc (20:45:1.8)--cycle;
		\draw [color={rgb, 255:red, 245; green, 166; blue, 35 }](1.95,1)node[right]{$\mathfrak{b}_{8}$};

		\draw [color={rgb, 255:red, 245; green, 166; blue, 35 },thick,rotate=30,red,->-=0.5,thick](1.414,1.414)arc(45:25:2);
		\draw [color={rgb, 255:red, 245; green, 166; blue, 35 },thick,rotate=32.5,rounded corners,->-=.2](1.556,1.556)arc (45:20:2.2)--(1.691,0.616)arc (20:45:1.8)--cycle;
		\draw[color={rgb, 255:red, 245; green, 166; blue, 35 }](1,2) node[above]{$\mathfrak{b}_{7}$};
		
		\draw [red,thick,rotate=80,->-=0.5,thick](1.414,1.414)arc(45:25:2);
		\draw [color={rgb, 255:red, 245; green, 166; blue, 35 },thick,rotate=82.5,rounded corners,->-=.2](1.556,1.556)arc (45:20:2.2)--(1.691,0.616)arc (20:45:1.8)--cycle;
		\draw[color={rgb, 255:red, 245; green, 166; blue, 35 }](-1,2) node[above]{$\mathfrak{b}_{1}$};
		
		\draw [red,thick,rotate=120,->-=0.5,thick](1.414,1.414)arc(45:25:2);
		\draw [color={rgb, 255:red, 245; green, 166; blue, 35 },thick,rotate=122.5,rounded corners,->-=.2](1.556,1.556)arc (45:20:2.2)--(1.691,0.616)arc (20:45:1.8)--cycle; 
		\draw [color={rgb, 255:red, 245; green, 166; blue, 35 }](-1.95,1)node[left]{$\mathfrak{b}_{2}$};

		\draw [color={rgb, 255:red, 245; green, 166; blue, 35 },thick,rotate=170,red,->-=0.5,thick](1.414,1.414)arc(45:25:2);
		\draw [color={rgb, 255:red, 245; green, 166; blue, 35 },thick,rotate=172.5,rounded corners,->-=.2](1.556,1.556)arc (45:20:2.2)--(1.691,0.616)arc (20:45:1.8)--cycle;
		\draw [color={rgb, 255:red, 245; green, 166; blue, 35 }](-1.95,-1)node[left]{$\mathfrak{b}_{3}$};
		
		\draw [color={rgb, 255:red, 245; green, 166; blue, 35 },thick,rotate=210,red,->-=0.5,thick](1.414,1.414)arc(45:25:2);
		\draw [color={rgb, 255:red, 245; green, 166; blue, 35 },thick,rotate=212.5,rounded corners,->-=.2](1.556,1.556)arc (45:20:2.2)--(1.691,0.616)arc (20:45:1.8)--cycle;
		\draw [color={rgb, 255:red, 245; green, 166; blue, 35 }](-1,-2)node[below]{$\mathfrak{b}_{4}$};
		
		\draw [color={rgb, 255:red, 245; green, 166; blue, 35 },thick,rotate=260,red,->-=0.5,thick](1.414,1.414)arc(45:25:2);
		\draw [color={rgb, 255:red, 245; green, 166; blue, 35 },thick,rotate=262.5,rounded corners,->-=.2](1.556,1.556)arc (45:20:2.2)--(1.691,0.616)arc (20:45:1.8)--cycle;
		\draw [color={rgb, 255:red, 245; green, 166; blue, 35 }](1,-2)node[below]{$\mathfrak{b}_{10}$};
		
		\draw [color={rgb, 255:red, 245; green, 166; blue, 35 },thick,rotate=300,red,->-=0.5,thick](1.414,1.414)arc(45:25:2);
		\draw [color={rgb, 255:red, 245; green, 166; blue, 35 },thick,rotate=302.5,rounded corners,->-=.2](1.556,1.556)arc (45:20:2.2)--(1.691,0.616)arc (20:45:1.8)--cycle;
		\draw [color={rgb, 255:red, 245; green, 166; blue, 35 }](1.95,-1)node[right]{$\mathfrak{b}_{9}$};

		\draw [red](0.3,3.1)node[right]{$\Gamma_0$};
		\draw[red](-1,2.1)node[left]{$\Gamma_1$};
		\draw[red](-1.1,-2.1)node[left]{$\Gamma_4$};
		\draw 
		[red](0.3,-3.1)node[right]{$\Gamma_7$};
		\draw[red](-1.8,1.3)node[left]{$\Gamma_2$};
		\draw[red](-1.8,-1.4)node[left]{$\Gamma_3$};

		\draw[red](1.1,2.1)node[right]{$\Gamma_7$};
		\draw[red](1.1,-2.1)node[right]{$\Gamma_{10}$};
		
		\draw[red](1.8,1.4)node[right]{$\Gamma_8$};
		\draw[red](1.8,-1.4)node[right]{$\Gamma_{9}$};

		\draw [red](0.3,0.7)node[right]{$\Gamma_6$};
		\draw [red](0.3,-1.)node[right]{$\Gamma_5$};

		\draw [red,-<-=0.5,thick] (0,3.4)--(0,2.8);
		\draw [red,->-=0.5,thick] (0,0.4)--(0,1.4) ;
		
		\draw [red,-<-=0.5,thick] (0,-0.4)--(0,-1.4) ;
		\draw [red,->-=0.5,thick] (0,-3.4)--(0,-2.8);
		
		\draw[color={rgb, 255:red, 245; green, 166; blue, 35 },thick,rounded corners,>-]  (-0.2,1)--(-0.2,1.5)--(0.2,1.5)--(0.2,1);\draw[color={rgb, 255:red, 245; green, 166; blue, 35 },thick,rounded corners,>-]  (0.2,1)--(0.2,0.3)--(-0.2,0.3)--(-0.2,1)node[left]{$ \mathfrak{b}_6$};
		\draw[color={rgb, 255:red, 245; green, 166; blue, 35 },thick,rounded corners,-<]  (-0.2,-1)--(-0.2,-1.5)--(0.2,-1.5)--(0.2,-1);\draw[color={rgb, 255:red, 245; green, 166; blue, 35 },thick,rounded corners,-<]  (0.2,-1)--(0.2,-0.3)--(-0.2,-0.3)--(-0.2,-1)node[left]{$ \mathfrak{b}_5$};
		\draw[color={rgb, 255:red, 245; green, 166; blue, 35 },thick,rounded corners,-<]  (-0.2,-3.1)--(-0.2,-3.5)--(0.2,-3.5)--(0.2,-3.1);\draw[color={rgb, 255:red, 245; green, 166; blue, 35 },thick,rounded corners,-<]  (0.2,-3.1)--(0.2,-2.7)--(-0.2,-2.7)--(-0.2,-3.1)node[left]{$ \mathfrak{b}_7$};
		
		\draw[color={rgb, 255:red, 74; green, 144; blue, 226 },thick,rounded corners,->-=0.5] (-.52,1.93) --(-0.3,2.8)--(0,2.8) ;
		\draw [color={rgb, 255:red, 74; green, 144; blue, 226 },thick,rounded 	corners,densely dashed,->-=.5](0,2.8)--(-.2,1.93)--(-0.52,1.93);
		\draw[color={rgb, 255:red, 74; green, 144; blue, 226 }] (-.5,2.7) node[left]{$  \mathfrak{a}_1$};
		
		\draw[color={rgb, 255:red, 74; green, 144; blue, 226 },thick,rotate=180,rounded corners,-<-=0.5] (-.52,1.93) --(-0.3,2.8)--(0,2.8) ;
		\draw [color={rgb, 255:red, 74; green, 144; blue, 226 },thick,rotate=180,rounded 	corners,densely dashed,-<-=.5](0,2.8)--(-.2,1.93)--(-0.52,1.93);\draw[color={rgb, 255:red, 74; green, 144; blue, 226 }] (.5,-2.7) node[right]{$  \mathfrak{a}_{11}$};
		
		\draw[color={rgb, 255:red, 74; green, 144; blue, 226 },thick,rounded corners,-<-=.5] (-1.93,0.52)--(-2.23,0.52)--(-2.23,-0.52)--(-1.93,-0.52);
		\draw[color={rgb, 255:red, 74; green, 144; blue, 226 },thick,densely dashed,rounded corners,->-=.5] (-1.93,0.52)--(-1.63,0.52)--(-1.63,-0.52)--(-1.93,-0.52);
		\draw[color={rgb, 255:red, 74; green, 144; blue, 226 }] (-2.3,0) node[left]{$  \mathfrak{a}_3$};
		
		\draw[color={rgb, 255:red, 74; green, 144; blue, 226 },thick,rotate=180,rounded corners,->-=.5] (-1.93,0.52)--(-2.23,0.52)--(-2.23,-0.52)--(-1.93,-0.52);
		\draw[color={rgb, 255:red, 74; green, 144; blue, 226 },thick,rotate=180,densely dashed,rounded corners,-<-=.5] (-1.93,0.52)--(-1.63,0.52)--(-1.63,-0.52)--(-1.93,-0.52);
		\draw[color={rgb, 255:red, 74; green, 144; blue, 226 }] (2.3,0) node[right]{$  \mathfrak{a}_9$};
		
		\draw[color={rgb, 255:red, 74; green, 144; blue, 226 },thick,rotate=-45,rounded corners,-<-=.5] (-1.97,0.35)--(-2.27,0.35)--(-2.27,-0.35)--(-1.97,-0.35);
		\draw[color={rgb, 255:red, 74; green, 144; blue, 226 },thick,rotate=-45,densely dashed,rounded corners,->-=.5] (-1.97,0.35)--(-1.67,0.35)--(-1.67,-0.35)--(-1.97,-0.35);
		\draw[color={rgb, 255:red, 74; green, 144; blue, 226 }] (-1.8,1.8) node[left]{$  \mathfrak{a}_2$};
		
		\draw[color={rgb, 255:red, 74; green, 144; blue, 226 },thick,rotate=45,rounded corners,-<-=.5] (-1.97,0.35)--(-2.27,0.35)--(-2.27,-0.35)--(-1.97,-0.35);
		\draw[color={rgb, 255:red, 74; green, 144; blue, 226 },thick,rotate=45,densely dashed,rounded corners,->-=.5] (-1.97,0.35)--(-1.67,0.35)--(-1.67,-0.35)--(-1.97,-0.35);
		\draw[color={rgb, 255:red, 74; green, 144; blue, 226 }] (-1.8,-1.8) node[left]{$  \mathfrak{a}_4$};
		
		\draw[color={rgb, 255:red, 74; green, 144; blue, 226 },thick,rotate=135,rounded corners,->-=.5] (-1.97,0.35)--(-2.27,0.35)--(-2.27,-0.35)--(-1.97,-0.35);
		\draw[color={rgb, 255:red, 74; green, 144; blue, 226 },thick,rotate=135,densely dashed,rounded corners,-<-=.5] (-1.97,0.35)--(-1.67,0.35)--(-1.67,-0.35)--(-1.97,-0.35);
		\draw[color={rgb, 255:red, 74; green, 144; blue, 226 }] (1.8,-1.8) node[right]{$  \mathfrak{a}_{10}$};
		
		\draw[color={rgb, 255:red, 74; green, 144; blue, 226 },thick,rotate=225,rounded corners,->-=.5] (-1.97,0.35)--(-2.27,0.35)--(-2.27,-0.35)--(-1.97,-0.35);
		\draw[color={rgb, 255:red, 74; green, 144; blue, 226 },thick,rotate=225,densely dashed,rounded corners,-<-=.5] (-1.97,0.35)--(-1.67,0.35)--(-1.67,-0.35)--(-1.97,-0.35);
		\draw[color={rgb, 255:red, 74; green, 144; blue, 226 }] (1.8,1.8) node[right]{$  \mathfrak{a}_8$};
		
		\draw[color={rgb, 255:red, 74; green, 144; blue, 226 },thick,rounded corners] (0,0.4)--(-0.3,0.4)--(-0.3,0) node[left]{$  \mathfrak{a}_6$};
		\draw [color={rgb, 255:red, 74; green, 144; blue, 226 },thick,rounded corners,>-](-0.3,0)--(-0.3,-0.4)--(0,-0.4);
		\draw[color={rgb, 255:red, 74; green, 144; blue, 226 },thick,rounded corners,densely dashed,-<] (0,0.4)--(0.3,0.4)--(0.3,0);
		\draw [color={rgb, 255:red, 74; green, 144; blue, 226 },thick,rounded 	corners,densely dashed](0.3,0)--(0.3,-0.4)--(0,-0.4);
		
		\draw[color={rgb, 255:red, 74; green, 144; blue, 226 },thick,rounded corners,-<-=0.3] (-0.52,-1.93) --(-.52,-1.4)--(0,-1.4);
		\draw [color={rgb, 255:red, 74; green, 144; blue, 226 },thick,rounded 	corners,densely dashed,-<-=0.3](-0.,-1.4)--(-0,-1.93)--(-0.52,-1.93);
		\draw[color={rgb, 255:red, 74; green, 144; blue, 226 }] (-.4,-1.3) node[left]{$  \mathfrak{a}_5$};
		
		\draw[color={rgb, 255:red, 74; green, 144; blue, 226 },thick,rotate=180,rounded corners,->-=0.3] (-0.52,-1.93) --(-.52,-1.4)--(0,-1.4);
		\draw [color={rgb, 255:red, 74; green, 144; blue, 226 },thick,rotate=180,rounded 	corners,densely dashed,->-=0.3](-0.,-1.4)--(-0,-1.93) --(-0.52,-1.93);
		
		\draw[color={rgb, 255:red, 74; green, 144; blue, 226 }] (.4,1.3) node[right]{$  \mathfrak{a}_7$};
		\draw[color={rgb, 255:red, 74; green, 144; blue, 226 }] (.5,-2.7) node[right]{$  \mathfrak{a}_{11}$};

		\draw[-{Stealth[length=5]}, dotted]  (-1.82,.7)--(-1.82,.3)--(-3.5,.3)node[left] {\footnotesize$\begin{pmatrix}
				0&\I\alpha_1^{-1}\mathrm{e}^{-2\I\theta(\lambda)} \\\I\alpha_1\mathrm{e}^{2\I\theta(\lambda)}&0\\
			\end{pmatrix}$};  
		\draw[ -, dotted] (1.82,.7)-- (1.82,0.3)--(-3,0.3);
		
		\draw[-{Stealth[length=5]}, dotted] (-1.82,-.7)--(-1.82,-.3)--(-3.5,-.3)node[left] {\footnotesize$\begin{pmatrix}
				0&\I\alpha_1\mathrm{e}^{-2\I\theta(\lambda)} \\\I\alpha_1^{-1}\mathrm{e}^{2\I\theta(\lambda)}&0\\
			\end{pmatrix}$};  
		\draw[-, dotted] (1.82,-.7)-- (1.82,-0.3)--(-3,-0.3);
		
		\draw[-{Stealth[length=5]}, dotted]  (-.82,1.7)--(-.82,1.5)--(-3.5,1.5)node[left] {\footnotesize$\begin{pmatrix}
				0&\I\alpha_2^{-1}\mathrm{e}^{-2\I\theta(\lambda)} \\\I\alpha_2\mathrm{e}^{2\I\theta(\lambda)}&0\\
			\end{pmatrix}$};  
		\draw[ -, dotted] (.82,1.7)-- (.82,1.5)--(-3,1.5);
		
		\draw[-{Stealth[length=5]}, dotted]  (-.82,-1.7)--(-.82,-1.5)--(-3.5,-1.5)node[left] {\footnotesize$\begin{pmatrix}
				0&\I\alpha_2\mathrm{e}^{-2\I\theta(\lambda)} \\\I\alpha_2^{-1}\mathrm{e}^{2\I\theta(\lambda)}&0\\
			\end{pmatrix}$};  
		\draw[ -, dotted] (.82,-1.7)-- (.82,-1.5)--(-3,-1.5);

		\draw[-, dotted] (0,3.2)-- (-.5,3.2);
		\draw [-{Stealth[length=5]}, dotted] (-.5,2.5)--(-3.5,2.5)node[left] {\footnotesize$\begin{pmatrix}
				0&-\beta_1^{-1}\mathrm{e}^{-2\I\theta(\lambda)} \\\beta_1\mathrm{e}^{2\I\theta(\lambda)}&0\\
			\end{pmatrix}$};
		\draw[ dotted] (0,1)--(-0.5,1)--(-0.5,3.2);
		
		\draw[-, dotted] (0,-3.2)-- (-.5,-3.2);
		\draw[-{Stealth[length=5]}, dotted] (-.5,-2.5)--(-3.5,-2.5)node[left] {\footnotesize$\begin{pmatrix}
				0&-\beta_1\mathrm{e}^{-2\I\theta(\lambda)} \\\beta_1^{-1}\mathrm{e}^{2\I\theta(\lambda)}&0\\
			\end{pmatrix}$}; 
		\draw[ dotted] (0,-1)--(-0.5,-1)--(-0.5,-3.2);
		
	\end{tikzpicture}
	\caption{ Jump curve $\Gamma=\bigcup_{j=0}^{4p+4q-1}\Gamma_j$ of the RH problem \ref{RHP0 g},  $\mathcal{R}$  and basis of  its  first homology group for $p=2$, $q=1$ case.} \label{Fig g=11}
	
\end{figure}
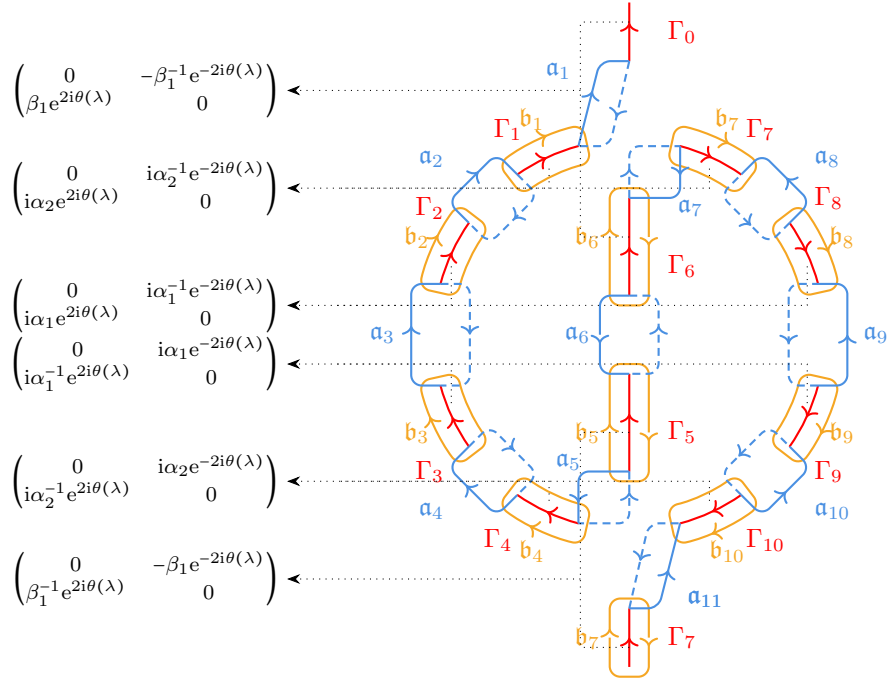

\subsection{$g-$function mechanism}
Similarly, we show the differential of the $g-$function, namely the Abel differential $\d g$, which consists of two parts $$\d g=\frac{y}{4}\d g^{(y)}+2t\d g^{(t)}.$$ $\d g$ is corresponding to $\theta$ defined in \eqref{def theta} and $\Gamma$,  and is uniquely determined by
	\begin{align}\label{g condition 7}
	\left\{\begin{array}{ll}
		\d g^{(y)}\sim (1+\lambda^{-2})\d\lambda,& \lambda \rightarrow  \infty , 0 ,\\
		\oint_{\mathfrak{b}_j}\d g^{(y)}=0,& j=1,\cdots,4(p+q)-1.
	\end{array}\right.\ 
	\left\{\begin{array}{ll}
		\d g^{(t)}\sim -\frac{\lambda^4-6\lambda^2+1 }{4(1+\lambda^2)^3}\d\lambda,& \lambda \rightarrow   \pm\I,\\
		\oint_{\mathfrak{b}_j}\d g^{(t)}=0,& j=1,\cdots,4(p+q)-1.
	\end{array}\right.
\end{align}
		Let  $g-$function  $g(\lambda)=g(\lambda;y,t)$ be the Abel integral of $\mathrm{d}g$ begin with $(\lambda,R(\lambda))=(\I a_1^{-1},0)$ on $\mathcal{R}$. On the first sheet $g(\lambda)$ is well defined analytic function  on    $\lambda\in\mathbb{C}\backslash \Gamma$ which  satisfies:
		\begin{align}
			& g_+(\lambda) + g_-(\lambda) =  y\Omega_j^{(y)}+t\Omega_j^{(t)} & \lambda \in \Gamma_j,  \\
			& g(\lambda)-\theta(\lambda) =\mathcal{O}(1) & \lambda \rightarrow  \infty , 0 , \pm\I\\
			&\lim_{\lambda\to\infty}g(\lambda)-\theta(\lambda)=\frac{1}{4}( y\Omega_{4p+4q-1}^{(y)}+t\Omega_{4p+4q-1}^{(t)}),
		\end{align}	where the constants independent on $\lambda,\ y,\ t$ are given by $$\Omega_j^{(y)}=\left\{\begin{array}{lll}
			0& j=0,\\
			\displaystyle-\frac{1}{4}\sum_{l=1}^{j}\oint_{ \mathfrak{a}_l }\mathrm{d}g^{(y)},&j=1,\cdots,4p+4q-1,\end{array}\right.\quad
		\Omega_j^{(t)}=\left\{\begin{array}{lll}
			0,& j=0,\\
			\displaystyle-2\sum_{l=1}^{j}\oint_{ \mathfrak{a}_l }\mathrm{d}g^{(t)},&j=1,\cdots,4p+4q-1.\end{array}\right.
		$$

%
%
\subsection{Explicitly solvable RH problem}  Via the transformation  \begin{align}\label{trans1}
		M^{(1)}(\lambda):=\left\{\begin{array}{ll}
			\beta_{1}^{\frac{\hat{\sigma}_{3}}{2}}\mathrm{e}^{-\frac{ \I\pi}{4}\hat{\sigma}_{3}}\mathrm{e}^{-\frac{\I}{4}( y\Omega_{4p+4q-1}^{(y)}+t\Omega_{4p+4q-1}^{(t)})\sigma_{3} }M(\lambda)\mathrm{e}^{\I (g(\lambda)-\theta(\lambda))\sigma_{3}},\quad &q\neq0,\\
			\alpha_{p}^{\frac{\hat{\sigma}_{3}}{2}}\mathrm{e}^{-\frac{\I}{4}( y\Omega_{4p-1}^{(y)}+t\Omega_{4p-1}^{(t)})\sigma_{3} }M(\lambda)\mathrm{e}^{\I (g(\lambda)-\theta(\lambda))\sigma_{3}},\quad& q=0,
		\end{array}\right.
	\end{align} it is readily seen that 
   $M^{(1)}(\lambda)$ satisfies the following solvable RH problem.
	\begin{Rhp} \hfill \label{RHP g=7}
		\begin{itemize}
			\item  $M^{(1)}(\lambda)$ is holomorphic on $\lambda\in\mathbb{C}\backslash\Gamma$.
			\item $M^{(1)}_+(\lambda)=	M^{(1)}_-(\lambda)\left\{\begin{array}{ll}
			 \begin{pmatrix}
					0&\I  \\\I &0\\
				\end{pmatrix},\quad &\lambda\in\Gamma_0,\\[8pt]
				 \begin{pmatrix}
					0&\I\mathrm{e}^{-2\pi\I C_j}  \\\I\mathrm{e}^{2\pi \I C_j} &0\\
				\end{pmatrix},\quad &\lambda\in\Gamma_j,\ j=1,\cdots,4p+4q-1,\end{array}\right.$ \\
			 where $C_j$ are shown is \eqref{C qn0}-\eqref{C q=0} respectively.
			 \item $M^{(1)}(\lambda)$ has at most $-1/4$ singularity on the endpoints of $\Gamma$.
			\item As $\lambda\rightarrow\infty$, $ M^{(1)}(\lb)=I+\mathcal{O}(\lambda^{-1}).$
		\end{itemize}	\end{Rhp}
	
		For $q\neq0$, above constants  $C_j=C_j(y,t)$ are given by
		\begin{align}\label{C qn0}
			2\pi  C_j=\left\{\begin{array}{llllllll}
			y\Omega_j^{(y)}+t\Omega_j^{(t)}-\I \log\frac{\beta_{j+1}}{\beta_{1}},\quad & j=1,\cdots,q-1,\\
			y\Omega_j^{(y)}+t\Omega_j^{(t)}-\I \log\frac{\alpha_{p+q-j}}{\beta_{1}}+\frac{ \pi}{2} ,\quad & j=q,\cdots,p+q-1,\\
			y\Omega_j^{(y)}+t\Omega_j^{(t)}+\I \log\alpha_{j-p-q+1}\beta_{1}+\frac{ \pi}{2} ,\quad & j=p+q,\cdots,2p+q-1,\\
			y\Omega_j^{(y)}+t\Omega_j^{(t)}+\I\log\beta_{1}\beta_{2p+2q-j},\quad &  j=2p+q,\cdots,2p+2q-1,\\
			y\Omega_j^{(y)}+t\Omega_j^{(t)}-\I \log\frac{\beta_{j-2p-2q+1}}{\beta_{1}},\quad & j=2p+2q,\cdots,2p+3q-1,\\
			y\Omega_j^{(y)}+t\Omega_j^{(t)}-\I \log\frac{\alpha_{3p+3q-j}}{\beta_{1}}+\frac{ \pi}{2} ,\quad & j=2p+3q,\cdots,3p+3q-1,\\
			y\Omega_j^{(y)}+t\Omega_j^{(t)}+\I \log\alpha_{j-3p-3q+1}\beta_{1}+\frac{ \pi}{2} ,\quad & j=3p+3q,\cdots,4p+3q-1,\\
			y\Omega_j^{(y)}+t\Omega_j^{(t)}+\I\log\beta_{1}\beta_{4p+4q-j},\quad &  j=4p+3q,\cdots,4p+4q-1.
		\end{array}\right.
		\end{align}
 
		For $q=0$,  $C_j$ are given by
		\begin{align}\label{C q=0}
			2\pi  C_j=\left\{\begin{array}{llllllll}
			y\Omega_j^{(y)}+t\Omega_j^{(t)}-\I \log\frac{\alpha_{p-j}}{\alpha_{p}}+\frac{ \pi}{2} ,\quad & j=1,\cdots,p-1,\\
			y\Omega_j^{(y)}+t\Omega_j^{(t)}+\I \log\alpha_{j-p+1}\alpha_{p}+\frac{ \pi}{2} ,\quad & j=p,\cdots,2p-1,\\
			y\Omega_j^{(y)}+t\Omega_j^{(t)}-\I \log\frac{\alpha_{3p-j}}{\alpha_{p}}+\frac{ \pi}{2} ,\quad & j=2p,\cdots,3p-1,\\
			y\Omega_j^{(y)}+t\Omega_j^{(t)}+\I \log\alpha_{j-3p+1}\alpha_{p}+\frac{ \pi}{2} ,\quad & j=3p,\cdots,4p-1.
		\end{array}\right.
		\end{align} We do not need to be apprehensive about  multi-value of  the logarithm in \eqref{C qn0}-\eqref{C q=0} because any choice of it keeps the jump condition in the RH problem \ref{RHP g=7}. Let $\kappa=\kappa(\lambda)$ be
 \begin{align*}
		\kappa^4= \prod_{j=1}^{p}\frac{(\lambda^{2}+\I(a_{j}- a_{j}^{-1})\lambda+1)(\lambda^{2}-\I(b_{j}- b_{j}^{-1})\lambda+1)}{(\lambda^{2}-\I(a_{j}- a_{j}^{-1})\lambda+1)(\lambda^{2}+\I(b_{j}- b_{j}^{-1})\lambda+1)}
		\prod_{l=1}^{q}\frac{(\lambda^{2}-\mathrm{e}^{2\I c_{l}})(\lambda^{2}- \mathrm{e}^{-2\I d_{l}})}{(\lambda^{2}-\mathrm{e}^{-2\I c_{l}})(\lambda^{2}- \mathrm{e}^{2\I d_{l}})}.
	\end{align*} 
	which is analytic on $\lambda\in\mathbb{C}\backslash\Gamma$ and fixed by requiring $$  \kappa=1+\mathcal{O}(\lambda^{-1}), \lambda\rightarrow\infty.$$
	Denote the Abel map of $\mathcal{R}$ as  $\mathcal{A}(\lambda)$ on the first sheet:
	\begin{align*}
		\mathcal{A}(\lambda):=\left\{\begin{array}{ll}
			\left(\int_{\I a_{1}^{-1}}^{\lambda}\omega_j\right)_{j=1,\cdots,4p+4q-1}, \quad &q\neq0,\\
		 \left(\int_{-\E^{-\I c_{p}}}^{\lambda}\omega_j\right)_{j=1,\cdots,4p-1}, \quad &q=0,
		\end{array}\right.
	\end{align*} where $\omega_j$, $j=1,\cdots,4p+4q-1$ are the normalized holomorphic differential on $\mathcal{R}$ such that
	 \begin{align}
		\oint_{\mathfrak{b}_j}\omega_k=\delta_{jk}.
		\end{align}
	 The Riemann Theta function associated with period matrix $B$ $\Theta$ is given by \begin{align}
		\Theta(z):=\sum_{l\in\mathbb{Z}^{4p+4q-1}}\exp\left(\pi\I\left<l,Bl\right>+2\pi\I\left<l,z\right>\right), z\in\mathbb{C}^{4p+4q-1},
	\end{align} where $B=(B_{kj})_{j,k=1,\cdots,4p+4q-1}$ \begin{align}
	B_{kj}:=\sum_{l=1}^j\oint_{\mathfrak{a}_l}\omega_k,\quad j,k=1,\cdots,4p+4q-1.
	\end{align}
Let  $K:=(K_j)_{j=1,\cdots,4p+4q-1}\in\mathbb{C}^{4p+4q-1}$ be the Riemann constant of $\mathcal{R}$ with
	\begin{align}
		K_j=\frac{1}{2}\sum_{k=1}^{4p+4q-1}B_{kj}-\frac{j}{2}.
	\end{align}
	 Since   $\kappa^2-\kappa^{-2}:\mathcal{R}\to \mathbb{C}\cup\{\infty\}$ is a holomorphic mapping with degree $8(p+q)$ (where all branch points are simple poles), the zeros of $\kappa^2-\kappa^{-2}$ occur simultaneously on the upper and lower sheet of $\mathcal{R}$. Specifically,  if $(\lambda, R(\lambda))$ is a zero of $\kappa+\kappa^{-1}$ on the upper sheet,  then $(\lambda, -R(\lambda))$ is also a zero of $\kappa-\kappa^{-1}$ on the lower sheet. Introduce the  non-special divisor $\mathcal{D}$ as  the zero  of $\kappa+\kappa^{-1}$ on $\mathcal{R}$ except $(\infty,-R(\infty))$. 
	Thus for $$  e=\mathcal{A}(\mathcal{D})+K,$$ the function $
	\Theta(\mathcal{A}(P)-e)$ has precisely $4p+4q-1$ zeros corresponding to the divisor $\mathcal{D}$.
	The vector $C$ is determined by   \eqref{C qn0}-\eqref{C q=0}.
	With the help of above notations, 	the solution   $M^{(1)}(\lambda)$ of the RH problem \ref{RHP g=7}
	is explicitly  given by   \cite{Kot-BA}: 
	\begin{align} \label{M1 g}
		M^{(1)}(\lambda)=	\frac{1}{2}\begin{pmatrix}
			\frac{\Theta(\mathcal{A}(\infty)-e)}{\Theta(\mathcal{A}(\infty)+C-e)}&0\\
			0&\frac{\Theta(-\mathcal{A}(\infty)+e)}{\Theta(-\mathcal{A}(\infty)+C+e)}
		\end{pmatrix}
		\begin{pmatrix}
			\left(\kappa+\kappa^{-1}\right)\frac{\Theta(\mathcal{A}(\lambda)+C+e)}{\Theta(\mathcal{A}(\lambda)+e)}
			&	\left(\kappa-\kappa^{-1}\right)\frac{\Theta(-\mathcal{A}(\lambda)+C+e)}{\Theta(-\mathcal{A}(\lambda)+e)}\\
			\left(\kappa-\kappa^{-1}\right)\frac{\Theta(\mathcal{A}(\lambda)+C-e)}{\Theta(\mathcal{A}(\lambda)-e)}
			&	\left(\kappa+\kappa^{-1}\right)\frac{\Theta(-\mathcal{A}(\lambda)+C-e)}{\Theta(-\mathcal{A}(\lambda)-e)}
		\end{pmatrix}.\normalsize
	\end{align} 
	
We then obtain the exact expression of  $M(\lambda)$ by substituting \eqref{M1 g} into  \eqref{trans1}, then   the reconstruction formula \eqref{recons} shows  the precise genus-$(4p+4q-1)$ algebro-geometric solution depending on vector-valued parameters in \eqref{para1}-\eqref{para2} as \begin{align}\label{AGsol 7} 
		\begin{aligned}
			u^{(AG)}(y,t;\textbf{P}_1,\textbf{P}_2,\textbf{A},\textbf{B})&=\lim_{\lb\to \I}\frac{1}{\lb-\I}\left(1-\frac{m_1(\lb;y,t)m_2(\lb;y,t)}{m_1(\I;y,t)m_2(\I;y,t)}\right),
			\\
			x^{(AG)}(y,t;\textbf{P}_1,\textbf{P}_2,\textbf{A},\textbf{B})&=y+\log(\frac{m_1(\I;y,t)}{m_2(\I;y,t)})+2\I (yX^{(y)}+tX^{(t)}),
		\end{aligned}
	\end{align} where $X^{(y)}$, $X^{(t)}$ are constants  given by
	\begin{align}\label{X trans 7}
	X^{(y)}:=\frac{1}{4}\int_{\I}^{+\infty\I}\d g^{(y)}-(1+\lambda^{-2})\d\lambda,\quad X^{(t)}:=2\int_{\I}^{+\infty\I}\d g^{(t)}+\frac{\lambda^4-6\lambda^2+1 }{4(1+\lambda^2)^3}\d \lambda, 
	\end{align} with the path of integral   on the imaginary axis. It is well-defined because  that \eqref{g condition 7} implies that the multi-value of $\d g$ on imaginary axis can be ignored.
	In addition, { \footnotesize 
		\begin{align*}
			m_1(\lambda;y,t):=	  	\frac{(\kappa+\kappa^{-1})\Theta(\mathcal{A}(\infty)-e)\Theta(\mathcal{A}(\lambda)+C(y,t)+e)}{2\Theta(\mathcal{A}(\infty)+C(y,t)-e)\Theta(\mathcal{A}(\lambda)+e)}-\I\mathrm{e}^{-\I\pi C_{4p+4q-1}}	\frac{\left(\kappa-\kappa^{-1}\right)\Theta(-\mathcal{A}(\infty)+e)\Theta(\mathcal{A}(\lambda)+C(y,t)-e)}{2\Theta(-\mathcal{A}(\infty)+C(y,t)+e)\Theta(\mathcal{A}(\lambda)-e)} ,
		\end{align*}
		\begin{align*}
			m_2(\lambda;y,t):= 	\frac{(\kappa+\kappa^{-1})\Theta(-\mathcal{A}(\infty)+e)\Theta(-\mathcal{A}(\lambda)+C(y,t)-e)}{2\Theta(-\mathcal{A}(\infty)+C(y,t)+e)\Theta(-\mathcal{A}(\lambda)-e)}
			+ \I	\mathrm{e}^{\I\pi C_{4p+4q-1}}\frac{\left(\kappa-\kappa^{-1}\right)\Theta(\mathcal{A}(\infty)-e)\Theta(-\mathcal{A}(\lambda)+C(y,t)+e)}{2\Theta(\mathcal{A}(\infty)+C(y,t)-e)\Theta(-\mathcal{A}(\lambda)+e)}.
	\end{align*}}\normalsize

	\subsection{Alternative form of the solvable RH problem}
	Similarly, we have an alternative $g-$function  $\tilde{g}=\tilde{g}(\lambda)$ which is the Abel integral of 
	\begin{align}
		\d \tilde{g}:=	\frac{y}{4}\d g^{(y)}+2t\d g^{(t)},
	\end{align} from $(\I a^{-1},R(\I a^{-1}))\in\mathcal{R}$.  
$\tilde{g}^{(y)}$ and 
$\tilde{g}^{(t)}$are holomorphic differentials on $\mathcal{R}$ and  uniquely determined by 	\begin{align}
	\left\{\begin{array}{ll}
		\d \tilde{g}^{(y)}\sim (1+\lambda^{-2})\d\lambda,& \lambda \rightarrow  \infty , 0 ,\\
		\oint_{\mathfrak{a}_j}\d \tilde{g}^{(y)}=0,& j=1,\cdots,4p+4q-1.
	\end{array}\right.,\quad
	\left\{\begin{array}{ll}
		\d \tilde{g}^{(t)}\sim -\frac{\lambda^4-6\lambda^2+1 }{4(1+\lambda^2)^3}\d\lambda,& \lambda \rightarrow   \pm\I,\\
		\oint_{\mathfrak{a}_j}\d \tilde{g}^{(t)}=0,& j=1,\cdots,4p+4q-1.
	\end{array}\right.
\end{align}
  Thus the Abel integral $\tilde{g}(\lb)$ is analytic on $\C\backslash  \tilde{\Gamma }$, where $\tilde{\Gamma}=\bigcup_{j=1}^{4p+4q-1}\gamma_j\cup\Gamma $, $\gamma_j$ denotes the part of $\mathfrak{a}_j$ on the first sheet of $\mathcal{R}$ (as Figure \ref{gamma-7}), and  
	\begin{align*}
		& \tilde{g}_+(\lb) - \tilde{g}_-(\lb) =  y\tilde{\Omega}_j^{(y)}+t\tilde{\Omega}_j^{(t)} & \lambda \in \gamma_j,  \\
		& \tilde{g}_+(\lb) +\tilde{g}_-(\lb) = 0& \lambda \in \Gamma_j,  \\
		& \tilde{g}(\lb)-\theta(\lb) =\mathcal{O}(1) & \lambda \rightarrow  \infty , 0 , \pm\I,\\	&\lim_{\lambda\to\infty}(\tilde{g}(\lambda)-\theta(\lambda))=0.
	\end{align*} where 
	$$\tilde{\Omega}_j^{(y)}=\frac{1}{4}\sum_{l=j}^{4p+4q-1}\oint_{ \mathfrak{b}_l }\mathrm{d}\tilde{g}^{(y)},\quad
	\tilde{\Omega}_j^{(t)}= 2\sum_{l=j}^{4p+4q-1}\oint_{ \mathfrak{b}_l }\mathrm{d}\tilde{g}^{(t)},\quad j=1,\cdots,4p+4q-1.
	$$

	The new $g-$function $\tilde{g}(\lambda)$ admits a new transformation from the RH problem \ref{RHP0 g}. Let \begin{align}\label{trans til}
		 \tilde{M}^{(1)}(\lb):=M(\lambda)\mathrm{e}^{\I (\tilde{g}(\lambda)-\theta(\lambda))\sigma_{3}},
	\end{align} we have that $\tilde{M}^{(1)}(\lambda)$ satisfies the following RH problem. 
	 	\begin{figure}
	 	[h]
	 	\begin{tikzpicture} [xscale=1,yscale=1]

	 		\draw[rounded corners,thin](-1,1.732)arc(120:105:2)node[below]{$\gamma_{q}$}arc(105:90:2)--(0,2.8);
	 		\draw[thin,->](-1,1.732)arc(120:105:2);  
	 		
	 		\draw[thin,->](-1.8126,-0.8452)arc(205:180:2)node[left]{$\gamma_{p+q}$};
	 		\draw[thin](-2,0)arc(180:155:2);
	 		
	 		\draw[rounded corners,thin](-1 ,-1.732)arc(240:255:2)arc(255:270:2)--(0,-1.2) (-1,-1.82)node[below]{$\gamma_{2p+q}$};
	 		\draw[thin,-<](-1 ,-1.732)arc(240:255:2);

	 		\draw[rotate=180,rounded corners,thin](-1,1.732)arc(120:105:2)node[below]{$\gamma_{4p+3q}$}arc(105:90:2)--(0,2.8);
	 		\draw[rotate=180,thin,-<](-1,1.732)arc(120:105:2);  
	 		
	 		\draw[rotate=180,thin,-<-=.5](-1.8126,-0.8452)arc(205:155:2) (-2,0)node[right]{$\gamma_{3p+3q}$};

	 		\draw[rotate=180,rounded corners,thin](-1 ,-1.732)arc(240:255:2)arc(255:270:2)--(0,-1.2)
	 		(-1,-1.82)node[above]{$\gamma_{2p+3q}$};
	 		\draw[rotate=180,thin,->](-1 ,-1.732)arc(240:255:2);

	 		\draw[thin,->](0,0.8)--(0,0)node[left]{$\gamma_{2p+2q}$};
	 		\draw[thin](0,0.8)--(0,-0.8);
	 		
	 		\draw[thin,->-=.5](0,2.9)--(0,3.3) (0,3.1)node[left]{$\gamma_{1}$};
	 		\draw[thin,-<-=.5](0,-2.9)--(0,-3.3) (0,-3.1)node[left]{$\gamma_{4p+4q-1}$};

	 		\fill[rotate=-55,white] (-0.2,1.9)--(0.2,1.9)--(0.2,2.1)--(-0.2,2.1); \draw[rotate=-55](0,2)node{\normalcolor$\cdots$};
	 		\draw[rotate=-30,->-=.6,thick,red] (0,2)arc(90:75:2);
	 		\draw[rotate=-65,->-=.6,thick,red] (0,2)arc(90:75:2);
	 		\draw[red] (1.3,1.7)node[right]{$\Gamma_{2p+3q}$};
	 		\draw[red] (1.9,.6)node[right]{$\Gamma_{3p+3q-1}$};

	 		\fill[rotate=55,white] (-0.25,1.9)--(0.25,1.9)--(0.25,2.1)--(-0.25,2.1); \draw[rotate=55](0,2)node{\normalcolor$\cdots$};
	 		\draw[rotate=45,->-=.6,thick,red] (0,2)arc(90:75:2);
	 		\draw[rotate=80,->-=.6,thick,red] (0,2)arc(90:75:2);
	 		\draw[red] (-1.3,1.7)node[left]{$\Gamma_{q}$};
	 		\draw[red] (-1.9,.6)node[left]{$\Gamma_{p+q-1}$};
	 		
	 		\fill[rotate=-125,white] (-0.25,1.9)--(0.25,1.9)--(0.25,2.1)--(-0.25,2.1); \draw[rotate=-125](0,2)node{\normalcolor$\cdots$};
	 		\draw[rotate=-135,->-=.6,thick,red] (0,2)arc(90:75:2);
	 		\draw[rotate=-100,->-=.6,thick,red] (0,2)arc(90:75:2);
	 		\draw[red] (1.3,-1.7)node[right]{$\Gamma_{4p+3q-1}$};
	 		\draw[red] (1.9,-.6)node[right]{$\Gamma_{3p+3q}$};

	 		\fill[rotate=125,white] (-0.25,1.9)--(0.25,1.9)--(0.25,2.1)--(-0.25,2.1); \draw[rotate=125](0,2)node{$\cdots$};
	 		\draw[rotate=150,->-=.6,thick,red] (0,2)arc(90:75:2);
	 		\draw[rotate=115,->-=.6,thick,red] (0,2)arc(90:75:2);
	 		\draw[red] (-1.3,-1.7)node[left]{$\Gamma_{2p+q-1}$};
	 		\draw[red] (-1.9,-.6)node[left]{$\Gamma_{p+q}$};
	 		
	 		\draw[->-=.6,thick,red] (0,0.4)--(0,0.8);
	 		\fill[white] (0.1,0.9)--(0.1,1.3)--(-0.1,1.3)--(-0.1,0.9);
	 		\draw (0,1.1)node{\normalcolor$\cdots$};
	 		\draw[->-=.6,thick,red] (0,1.4)--(0,1.8);
	 		\draw[red] (0.1,1.3)node[right]{ $\Gamma_{2p+3q-1}$};
	 		\draw[red] (0.1,.6)node[right]{ $\Gamma_{2p+2q}$};
	 		
	 		\draw[-<-=.6,thick,red] (0,-0.4)--(0,-0.8);
	 		\fill[white] (0.1,-0.9)--(0.1,-1.3)--(-0.1,-1.3)--(-0.1,-0.9);
	 		\draw (0,-1.1)node{\normalcolor$\cdots$};
	 		\draw[-<-=.6,thick,red] (0,-1.4)--(0,-1.8);
	 		\draw[red] (0.1,-1.3)node[right]{ $\Gamma_{2p+q}$};
	 		\draw[red] (0.1,-.6)node[right]{ $\Gamma_{2p+2q-1}$};
	 		
	 		\draw[-<-=.6,thick,red] (0,3.7)--(0,3.3);
	 		\fill[white] (0.1,2.9)--(0.1,2.7)--(-0.1,2.7)--(-0.1,2.9);
	 		\draw (0,2.8)node{\normalcolor$\cdots$};
	 		\draw[-<-=.6,thick,red] (0,2.6)--(0,2.2);
	 		\draw[red] (0.1,3.5)node[right]{ $\Gamma_{0}$};
	 		\draw[red] (-0.1,2.4)node[left]{ $\Gamma_{q-1}$};
	 		
	 		\draw[->-=.6,thick,red] (0,-3.7)--(0,-3.3);
	 		\fill[white] (0.1,-2.9)--(0.1,-2.7)--(-0.1,-2.7)--(-0.1,-2.9);
	 		\draw (0,-2.8)node{\normalcolor$\cdots$};
	 		\draw[->-=.6,thick,red] (0,-2.6)--(0,-2.2);
	 		\draw[red] (0.1,-3.5)node[right]{ $\Gamma_{4p+4q-1}$};
	 		\draw[red] (-0.1,-2.4)node[left]{ $\Gamma_{4p+3q}$};

	 	\end{tikzpicture}
	 	\caption{Jump curve $\tilde{\Gamma}$ of the RH problem \ref{RHP g=7 alt}.}\label{gamma-7}
	 \end{figure}
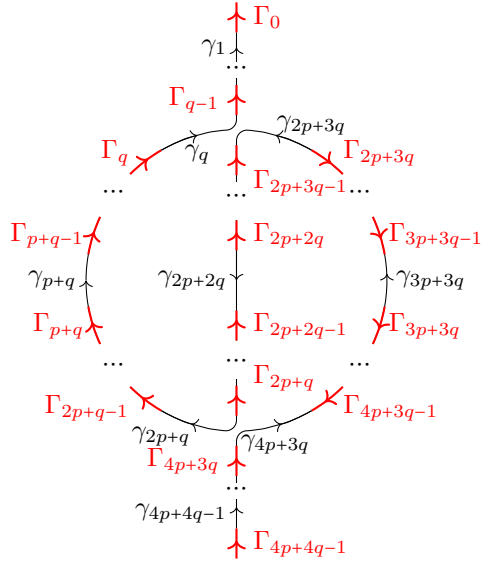
	\begin{Rhp}  \hfill\label{RHP g=7 alt}
		\begin{itemize}
			\item $\tilde{M}^{(1)}(\lambda)$ is holomorphic on $\lambda\in\mathbb{C}\backslash\tilde{\Gamma}$.
			\item$\tilde{M}_+^{(1)}(\lambda) =\tilde{M}_-^{(1)}(\lambda)\left\{\begin{array}{llll}
					\begin{pmatrix}
						0&-\beta_j^{-1} \\\beta_j&0\\
					\end{pmatrix},\quad \lambda\in\Gamma_{j-1}\cup\Gamma_{j+2p+2q-1},j=1,\cdots,q;\\
					\begin{pmatrix}
						0&\I\alpha_j^{-1} \\\I\alpha_j&0\\
					\end{pmatrix}, \quad \lambda\in\Gamma_{p+q-j}\cup\Gamma_{3p+3q-j},j=1,\cdots,p;\\
					\begin{pmatrix}
						0&\I\alpha_j \\\I\alpha_j^{-1}&0\\
					\end{pmatrix},\quad \lambda\in\Gamma_{j+p+q-1}\cup\Gamma_{j+3p+3q-1},j=1,\cdots,p;\\
					\begin{pmatrix}
						0&-\beta_j \\\beta_j^{-1}&0\\
					\end{pmatrix},\quad \lambda\in\Gamma_{2p+2q-j}\cup\Gamma_{4p+4q-j},j=1,\cdots,q;\\
					\mathrm{e}^{\I (y\tilde{\Omega}_j^{(y)}+t\tilde{\Omega}_j^{(t)})\sigma_{3}},\quad 	\lambda\in\gamma_j,\ j=1,\cdots,4p+4q-1.
					\end{array}\right.$
			
				\item $\tilde{M}^{(1)}(\lambda)$ has at most $-1/4$ singularity on the endpoints of $\Gamma$.
				\item $ \tilde{M}^{(1)}(\lb)=I+\mathcal{O}(\lambda^{-1}),\quad\lambda\rightarrow\infty.$
		\end{itemize}
	\end{Rhp}
Using the  transformation \eqref{trans til} from $M$ to $\tilde{M}^{(1)}$, the RH problem \ref{RHP g=7 alt} is associated with the solution \eqref{AGsol 7} via  reconstruction formula \eqref{recons}.  
	\begin{corollary}
		Denote $			\tilde{u}(x(y,t)),  \ \tilde{x}(y,t)$ as the result of substituting $\tilde{M}^{(1)}(\lambda)$      into the reconstruction formula \eqref{recons}, one can find that {\rm\begin{align}
		\begin{aligned}
			&\tilde{u}(x(y,t),t)=u^{(AG)}(y,t;\textbf{P}_1,\textbf{P}_2,\textbf{A},\textbf{B}),\\
			&\tilde{x}(y,t)= x^{(AG)}(y,t;\textbf{P}_1,\textbf{P}_2,\textbf{A},\textbf{B})+2\I \left(y\tilde{X}^{(y)}+t\tilde{X}^{(t)}\right),
		\end{aligned}
		\end{align}}
		where \begin{align*}
			\tilde{X}^{(y)}=\frac{1}{4}\lim_{\lambda\to\I}(\tilde{g}^{(y)}(\lambda)-(\lb-\lb^{-1})),\quad	\tilde{X}^{(t)}=2\lim_{\lambda\to\I}(\tilde{g}^{(t)}(\lambda)+\frac{\lb-\lb^{-1}}{(\lambda+\lambda^{-1})^2}).\end{align*}
	\end{corollary}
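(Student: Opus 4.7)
The plan is to exploit the fact that transformation \eqref{trans til} is a pure right-multiplication by the diagonal, unimodular factor $D(\lambda):=\mathrm{e}^{\I(\tilde{g}(\lambda)-\theta(\lambda))\sigma_{3}}$, with no constant conjugation or shift on the left. Consequently, if $\begin{pmatrix}m_{1}&m_{2}\end{pmatrix}=\begin{pmatrix}1&1\end{pmatrix}M$ denotes the reconstruction row attached to $M$ itself, then
\begin{equation*}
\begin{pmatrix}\tilde{m}_{1}(\lambda)&\tilde{m}_{2}(\lambda)\end{pmatrix}=\begin{pmatrix}1&1\end{pmatrix}\tilde{M}^{(1)}(\lambda)=\begin{pmatrix}m_{1}(\lambda)\mathrm{e}^{\I(\tilde{g}-\theta)}&m_{2}(\lambda)\mathrm{e}^{-\I(\tilde{g}-\theta)}\end{pmatrix}.
\end{equation*}
The first step is simply to read off this identity from \eqref{trans til}.

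The $u$-part of the corollary follows immediately: since $\det D\equiv 1$, the product $\tilde{m}_{1}\tilde{m}_{2}=m_{1}m_{2}$ holds identically in $\lambda$, so the ratio appearing inside the limit of \eqref{recons} is unchanged by the transformation. The limit defining $\tilde{u}$ therefore reproduces the very same limit that Theorem \ref{DDL} applied to $M$ produces, and the latter is by construction (Section \ref{sec:3}) the algebro-geometric solution $u^{(AG)}(y,t;\textbf{P}_{1},\textbf{P}_{2},\textbf{A},\textbf{B})$.

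For the $x$-identity, the ratio picks up the diagonal phase, $\tilde{m}_{1}(\I)/\tilde{m}_{2}(\I)=(m_{1}(\I)/m_{2}(\I))\mathrm{e}^{2\I(\tilde{g}(\I)-\theta(\I))}$, which yields
\begin{equation*}
\tilde{x}(y,t)=\Bigl(y+\log\frac{m_{1}(\I)}{m_{2}(\I)}\Bigr)+2\I\bigl(\tilde{g}(\I)-\theta(\I)\bigr)=x^{(AG)}(y,t;\textbf{P}_{1},\textbf{P}_{2},\textbf{A},\textbf{B})+2\I\bigl(\tilde{g}(\I)-\theta(\I)\bigr),
\end{equation*}
using the same identification between the reconstruction from $M$ and the formula \eqref{AGsol 7}. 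To finish I will split $\tilde{g}=\tfrac{y}{4}\tilde{g}^{(y)}+2t\tilde{g}^{(t)}$ and $\theta=\tfrac{y}{4}(\lambda-\lambda^{-1})-2t(\lambda-\lambda^{-1})/(\lambda+\lambda^{-1})^{2}$, pair off the $y$- and $t$-coefficients, and recognise the two resulting limits as the defining expressions of $\tilde{X}^{(y)}$ and $\tilde{X}^{(t)}$, giving exactly the claimed shift $2\I(y\tilde{X}^{(y)}+t\tilde{X}^{(t)})$.

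The one delicate point is confirming that $\tilde{g}(\lambda)-\theta(\lambda)$ is genuinely regular at $\lambda=\I$, so that both the evaluation at $\lambda=\I$ and the termwise separation into $y$- and $t$-pieces are legal. The pole of $\theta$ at $\pm\I$ coming from $(\lambda+\lambda^{-1})^{-2}$ must be cancelled by the matching pole prescribed on $\mathrm{d}\tilde{g}^{(t)}$ in the normalization just preceding Proposition \ref{RHP g=7 alt}, while the corresponding cancellation for the $y$-piece is immediate from the asymptotics $\mathrm{d}\tilde{g}^{(y)}\sim(1+\lambda^{-2})\mathrm{d}\lambda$. This amounts to a local Laurent expansion rather than a substantial obstacle, but it is the single analytic ingredient that must be invoked beyond the algebraic manipulation of \eqref{trans til}.
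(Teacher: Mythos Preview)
Your proof is correct and follows exactly the route the paper has in mind: the paper does not give a separate proof of this corollary but simply remarks, just before its statement, that ``using the transformation \eqref{trans til} from $M$ to $\tilde{M}^{(1)}$, the RH problem \ref{RHP g=7 alt} is associated with the solution \eqref{AGsol 7} via reconstruction formula \eqref{recons}.'' Your argument makes this explicit by observing that the right diagonal conjugation leaves $m_1 m_2$ invariant (giving the $u$-identity) and multiplies $m_1/m_2$ by $\mathrm{e}^{2\I(\tilde g-\theta)}$, whose value at $\lambda=\I$ splits into the $y$- and $t$-parts defining $\tilde X^{(y)}$ and $\tilde X^{(t)}$.
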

%
	
\vspace{3mm}

\noindent{\bf Acknowledgements.}
 The work of Fan is partially
supported by NSFC under grants 12271104, 51879045.  The work of Yang is partially supported by
Innovation Center for Mathematical Analysis of Fluid and Chemotaxis (Chongqing University).

\bibliographystyle{alpha}
\bibliography{ref-BA}
\end{document}